\newtheorem{thm}{Theorem}
\newtheorem{definition}{Definition}
\newproof{pf}{Proof}
\newcommand{\startmodif}{\color{black}}
\newcommand{\stopmodif}{\color{black}}
\begin{document}

\begin{frontmatter}

\title{\LARGE Least costly energy management for series hybrid electric vehicles}

\author{Simone~Formentin\corref{cor1}}\ead{simone.formentin@polimi.it}
\author{Jacopo~Guanetti\corref{cor2}}
\author{Sergio~M.~Savaresi\corref{cor2}}
\cortext[cor1]{Corresponding author. Tel: +39 02 2399 3498 - Fax: +39 02 2399 3412}
\address{Dipartimento di Elettronica, Informazione e Bioingegneria, Politecnico di Milano, P.za L. da Vinci 32, 20133 Milano, Italy.}

\begin{abstract}
Energy management of plug-in Hybrid Electric Vehicles (HEVs) has different challenges from non-plug-in HEVs, due to bigger batteries and grid recharging. Instead of tackling it to pursue energetic efficiency, an approach minimizing the driving cost incurred by the user -- the combined costs of fuel, grid energy and battery degradation -- is here proposed. A real-time approximation of the resulting optimal policy is then provided, as well as some analytic insight into its dependence on the system parameters. The advantages of the proposed formulation and the effectiveness of the real-time strategy are shown by means of a thorough simulation campaign.
\end{abstract}

\begin{keyword}                            
Series HEV \sep EREV \sep Energy management \sep Dynamic Programming \sep Pontryagin Minimum Principle \sep Optimal control
\end{keyword}       

\end{frontmatter}                      


\section{Introduction}

Hybrid Electric Vehicles (HEVs) are generally regarded to as an effective solution to improve the fuel economy and reduce CO2 emissions with respect to Internal Combustion Engine (ICE) vehicles. 
Since HEVs are usually equipped with (at least) two energy sources, a critical energy management problem arises, that is, a supervisory system is needed to determine how to generate the requested power. In the so-called ``mild HEVs", the downsized battery and the electrical motor do not allow to drive the vehicle based just on the electric power, but only to assist the ICE in low efficiency operating points. In this framework, heuristics and rule-based algorithms have shown to provide satisfactory results. On the other hand, highly hybridized powertrains call for more sophisticated control approaches for their higher flexibility \cite{sciarretta_control_2007}.

In the latter configuration, given a model of the hybrid powertrain, the best performance theoretically achievable over a driving schedule can be computed by means of optimization techniques, see, \textit{e.g.}, \cite{delprat2004control,barsali2004control}. A classical approach in HEVs aims at minimizing the overall fuel consumption, concurrently penalizing excessive deviations of the battery state of charge \cite{won2005energy,won2005intelligent}. Such a penalty term is very important for conventional HEVs, in which the minimization of the fuel consumption \textit{tout court} may lead to excessive battery charge depletion . 

The above optimization approach usually yields a non-causal control policy, which defines a useful upper bound in terms of performance for a given driving cycle. A good approximation of the above optimal policy can be found using the so-called Equivalent Consumption Minimization Strategy (ECMS) - based on the Pontryagin Minimum Principle - in which the knowledge of future power requests is replaced by a cycle-dependent parameter, see  \cite{sciarretta_control_2007,sciarretta_optimal_2004,serrao_ecms_2009,paganelli2000simulation,Kim2011}  for further details. Adaptive variants of the ECMS have also been developed and successfully implemented in real-time \cite{musardo_aecms_2005,ambuhl2009predictive}. Nonetheless, other real time approaches have been explored, based, \textit{e.g.}, on Model Predictive Control \cite{borhan2012mpc,poramapojana2012predictive} or Robust Control \cite{pisu2007comparative,pisu_lmi-based_2003}. 

The above strategies were originally conceived for conventional, non plug-in HEV powertrains, that is when the battery can be recharged exclusively during vehicle operation, \emph{e.g.} by regenerative braking or thermal power surplus. However, more recent plug-in HEVs make it possible to recharge the battery from the grid \cite{bradley2009design,axsen2013hybrid}. Quite simultaneously, progresses in battery technology are making big battery packs more affordable, thus extending the electric autonomy of such vehicles.

Upcoming HEVs are then more and more conceived as plug-in vehicles with a relatively large battery and a significant ``all-electric range", with a thermal unit often playing the role of a range extender. In view of this trend, on the one hand, the need for charge sustenance becomes less critical. On the other hand, since the battery has a more significant impact on the overall vehicle cost, the battery operating conditions leading to fast aging should be avoided.

Supervisory strategies have been proposed also for the energy management of plug-in and series HEVs. Sticking as a relevant case to the ECMS strategies mentioned above, some implementations for a plug-in HEV are presented \textit{e.g.} in \cite{Sciarretta2014}; quite intuitively, here the charge sustenance constraint can be relaxed, by taking into account the characteristics of the powertrain and the available information on the trip to be performed. A general framework for energy optimization of plug-in HEVs has been recently introduced in \cite{guardiola2014insight}, where the optimal data-driven tuning of the ECMS policy is also discussed.
In some recent works, battery aging is accounted for in the optimization problem. In \cite{moura2013battery} battery aging and energy consumption are both regarded as relevant phenomena for the optimal depletion strategy of the battery in a plug-in HEV. In \cite{Serrao2011,Ebbesen2012} a similar problem is tackled for HEVs with a hard charge sustenance constraint; in these works, ECMS-based strategies are developed, with an additional tuning parameter affecting the weight of the aging in the cost function.

\startmodif
The contributions of this paper can be summarized as follows.
Firstly, a \emph{least costly} formulation of the energy management is proposed, aiming to fully exploit series hybrid powertrains.
The underlying model also accounts for battery aging and the optimal control problem accounts for all the \emph{cost entries} related to both the electrical part and the thermal unit.

Secondly, by applying Dynamic Programming (DP) \cite{bertsekas1995dynamic}, it is shown that the resulting energy management policy does not necessarily yield minimum fuel consumption.
As a matter of fact, cheap fuels like CNG (Compressed Natural Gas) can prove cheaper than driving entirely on electric power, especially if battery purchase cost is considered; in such a scenario, a formulation in terms of \textit{total driving cost} is desirable from the point of view of the user.
Moreover, limited diffusion of alternative fuels may boost the adoption of multi-fuel range-extenders \cite{Fuerex}.
In the latter case, a total driving cost formulation allows to find a compromise \textit{e.g.} between a relatively expensive fuel that is easy to find, like gasoline, and a cheaper less widespread fuel, like CNG.

Unfortunately, the above DP-based solution relies upon the \textit{a-priori} knowledge of the driving cycle.
Therefore, as a further contribution of the paper, two causal implementations of the least costly energy management strategy are proposed.
The optimal policy is first derived based on a simplified model of the powertrain in an explicit way: although the model is less general, in this case the policy is expressed as a set of explicit rules, hence its implementation requires substantially less memory and computational power.
Furthermore it is shown that, when a more complex model of the powertrain is necessary, the optimal policy can still be computed numerically, attaining very close results to the acausal benchmark. Finally, the paper includes a sensitivity analysis, that investigates the performance of the numerical policy for a broad range of model parameters and energy costs.
\stopmodif

The remainder of the paper is as follows. A general formulation of the energy management problem - as well as some specific formulations in terms of energy consumption minimization - is given in \emph{Section \ref{sec:Section2}}, where the full-fledged simulator of the vehicle and the simulation scenarios used in the following sections are also presented. By deriving a suitable control-oriented model and an economic cost function, the least costly energy management approach is presented in \emph{Section \ref{sec:Section3}}, where the resulting non-causal policy is also derived by Dynamic Programming. \emph{Section \ref{sec:Section4}} provides the causal policies for the least costly energy management problem, while \emph{Section \ref{sec:Section5}} discusses the limits of applicability of such a strategy by means of a sensitivity study. The potential of the new approach is shown in each section by employing both a urban and a mixed urban-motorway driving cycle. The paper is ended by some concluding remarks.


\section{Problem formulation and simulation setup}
\label{sec:Section2}

In this section the HEV energy management problem is presented and the way it is commonly addressed in the literature is discussed. Moreover, the simulation setup and the driving cycles -- employed in the remainder of the paper to test the proposed strategy -- are introduced.

\subsection{Problem Formulation}
With ``energy management problem" it is meant the problem of designing a supervisory control layer with the aim of managing the power dispatch between multiple sources in a HEV.
More specifically, such a problem is commonly formalized as an optimal control problem over a finite time horizon.
With reasonable knowledge of the vehicle, the speed and slope profiles of the trip can be converted into a profile of requested electrical power in series HEVs, or mechanical power in parallel HEVs. The remainder of the paper is focused on series HEVs.

Formally, an energy management problem can be written as
\begin{equation}
\label{eq:GenericOCP}
\begin{aligned}
& \underset{u}{\text{min}}
& & J = h(x(T)) + \int_0^T g(t,x(t),u(t),w(t))dt \\
& \text{s. t.}
& & \dot{x} = f(t,x(t),u(t),w(t)) \\
&
& & x(0) = x_0\\
&
& & x(t) \in X\\
&
& & x(T) \in X_T\\
&
& & u(t) \in U
\end{aligned}
\end{equation}
where $J$ is the cost function to minimize, 
$x$ collects the state variables, 
$u$ is the control variable, 
$w$ represents the exogenous input variable,
$f$ denotes the state function,
$g$ is the running cost and
$h$ is the terminal cost.
$x, u, w$ are assumed to be scalar variables and $f, g, h$ are assumed to be scalar, possibly nonlinear functions.
$X=[\bar{x}_{min},\bar{x}_{max}] \subseteq \mathbb{R} $ is the set of admissible values for the state variable; the bounds $\bar{x}_{min},\bar{x}_{max}$ are assumed to be static.
$X_T$ is the set of admissible values for the final state.
$U=[u_{min}(t),u_{max}(t)] \subset \mathbb{R} \times \mathbb{R}^{T-1}$ is the set of admissible values for the input variable; the bounds $u_{min},u_{max}$ are assumed to possibly be time-varying.

Many approaches proposed in the literature aim at minimizing the fuel consumption for a given trip; therefore, the fuel mass flow rate is often chosen as the running cost as
\begin{equation}
\label{eq:FuelCost}
g(u(t)) = \dot{m}_{f}(u(t)).
\end{equation}
The fuel mass flow rate reasonably depends on the control policy $u(t)$.
The control input may be the battery current, the battery power, the generated power or the ratio between battery and generated power.
The state variable is typically the battery state of charge, which requires the introduction of a battery model.

A possible strategy is the \textit{Full Electric} mode, \textit{i.e.} the simple minimization of the fuel consumption, without any constraint or penalization on the final state
\begin{equation}
\label{eq:FEFinalCost}
\begin{array}{rl}
h(x(T)) &= 0\\
X_T &= X.
\end{array}
\end{equation}

As it can be easily understood, such an approach can lead to excessive depleting of the battery charge, if the trip is sufficiently long or demanding.
To overcome such an issue, a constraint on the final state of charge can be set, \textit{e.g.} the desired final state can be enforced to be equal to the initial state such that perfect ``charge sustenance" is accomplished. Notice that for the class of HEVs here considered, an arbitrary final state of charge could be requested by the user; the charge sustaining case is just one possible choice. 
Formally, this means
\begin{equation}
\label{eq:CSFinalCost}
\begin{array}{rl}
h(x(T)) &= 0\\
X_T &= \lbrace x(0) \rbrace,
\end{array}
\end{equation}
and this approach will be referred to as \textit{Charge Sustaining} mode, from now on.

A hard constraint on the final state can be fulfilled when a non-causal solution of the problem is computed.
When the optimization has to be performed in real-time, soft constraints are simpler to handle. The so-called \textit{Equivalent Consumption Minimization Strategy} (ECMS) approach \cite{sciarretta_control_2007,sciarretta_optimal_2004,serrao_ecms_2009,paganelli2000simulation} has an effective real-time implementation and can be stated as
\begin{equation}
\label{eq:ECMSFinalCost}
\begin{array}{rl}
h(x(T)) &= \zeta(x(T) - x(0))\\
X_T &= X,
\end{array}
\end{equation}
where $\zeta$ is a user-defined parameter usually depending on the considered driving cycle. Notice that ECMS strategies sometimes consider a modified cost function (see e.g. \cite{ambuhl2009predictive}) that penalizes not only the fuel consumption, but also any deviation of the state of charge from a reference value $x_{ref}(t)$ 
\begin{equation}
\label{eq:FuelCostECMS}
g(u(t)) = \dot{m}_{f}(u(t)) + \mu \left( \frac{x_{ref}(t)-x(t)}{\Delta x} \right)^{2q} ,
\end{equation}
where $\mu$ is a weighting factor, $q$ denotes the order of the penalty term and $\Delta x$ is the largest admissible deviation of the state of charge.

The above formulation of the energy management problem has been widely investigated in the literature (see \textit{e.g.} \cite{sciarretta_control_2007,sciarretta_optimal_2004,serrao_ecms_2009}).
Nonetheless, it was originally conceived for parallel HEVs, not rechargeable from the grid.
Current trend in HEVs market is oriented towards grid rechargeable vehicles with significant battery capacity.
Rather than merely increasing the efficiency of the thermal engine, in many cases the purpose is also to have an electric range suited for every-day urban use; a clue is the increasing commercial offer of plug-in HEVs and Extended Range EVs.
This peculiar aspects are considered in the recent literature when studying the energy management problem for this new generation of HEVs.
As a relevant and recent example, some implementations of the aforementioned ECMS for a plug-in HEV are presented in \cite{Sciarretta2014}; consistently with the remarks made above, the charge sustenance constraint is relaxed, taking into account the significant battery capacity and the available information on the trip to be performed.
As discussed in \cite{guardiola2014insight}, the tuning of ECMS policies for plug-in HEVs is highly related to the relative weight of battery energy and fuel energy; this weighting can be formulated to minimize, e.g., the vehicle energy or the CO2 emissions.

In this work, a formulation of the energy management problem is proposed, taking into account the phenomena with greatest impact on the overall driving cost incurred by the user. This choice will make the energy management a real least costly policy.

The main motivation of such a choice is that, in recent vehicles, the electric portion of the powertrain has a very significant impact on the overall cost of the vehicle, mainly due to the battery cost \cite{shiau2010optimal}.
This calls for a different formulation of the management problem, in which also the electric costs are explicitly considered in the overall minimization criterion.
Since this study is focused on series HEVs, it is as well reasonable to assume that a battery with significant capacity is available on board; in other words, for this class of HEVs, hard charge sustenance is not a critical need.
Therefore, while the cost of both electric and fuel energy are relevant \cite{guardiola2014insight}, it is also important to include the effect of the battery value depletion.

\subsection{Simulation Setup}\label{sec:simulator}
In order to test the proposed strategies, a full-fledged simulator of a series HEV was implemented using a backward facing approach. This simulator is based on the platform developed at ETH Zurich \cite{ETHsimulator} and is available on-line \cite{POLIMIsimulator}. More details about its implementation are given in Appendix A.

Backward-facing approach is generally intended for the simulation of the overall energetic performance of the vehicle, over time scales comparable to the duration of standard driving cycles.
The subsystems are described by approximate, quasi-static or low-order dynamic models.
The name of the approach refers to the fact that the inputs to the simulator are the vehicle road performance, for instance in terms of longitudinal speed and slope.
The behavior of the vehicle is simulated proceeding backwards in the powertrain and computing the upstream energy flows.
The outputs of our simulator are the fuel consumption, the battery state of charge and the battery state of health.

A conceptual scheme of the simulator is depicted in Fig.~\ref{fig:Simulator}, where $v$ is the vehicle longitudinal speed, $\theta$ is the road slope, $q_b$ is the battery state of charge, $\xi_b$ is the battery state of health, $\dot{m}_f$ is the fuel consumption of the thermal generator.
Notice that in Fig.~\ref{fig:Simulator} thin lines represent electric connections, while thick lines represent mechanical connections.

Three main areas can be identified, a vehicle dynamics area (containing the vehicle itself, the transmission and the motor), a battery area and a thermal generation area (containing the thermal engine and the electric generator).
The three areas are connected by means of an electric power link, where the sum of the battery power $P_b$ and of the generated power $P_r$ has to be equal to the traction motor power $P_m$.
Details on the parametrization and the equations included in each block are given in the Appendix.

\begin{figure*}
\centering
\includegraphics[]{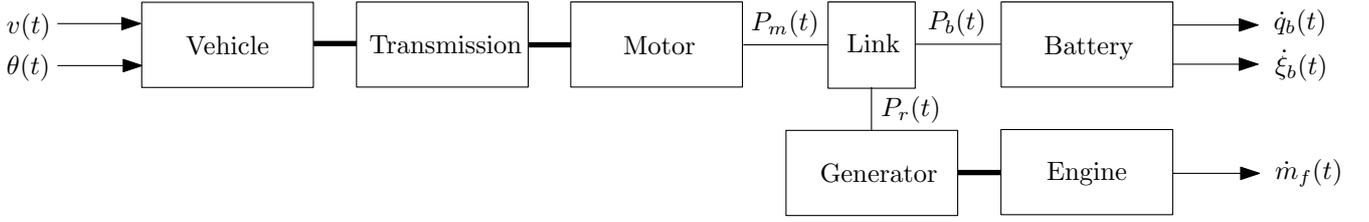}
\caption{Overall composition of the Vehicle Simulator.}
\label{fig:Simulator} 
\end{figure*}

In this work two different driving cycles are used as inputs to the simulator. The first driving cycle, depicted in Fig.~\ref{fig:UrbanCycle}, is the standard FTP urban driving cycle and is named \textit{Urban Driving Cycle} hereafter. The second driving cycle, depicted in Fig.~\ref{fig:CombinedCycle}, is obtained by merging the FTP urban driving cycle and the FTP Highway driving cycle and is named \textit{Combined Driving Cycle} hereafter.

\begin{figure}
\centering
\includegraphics[]{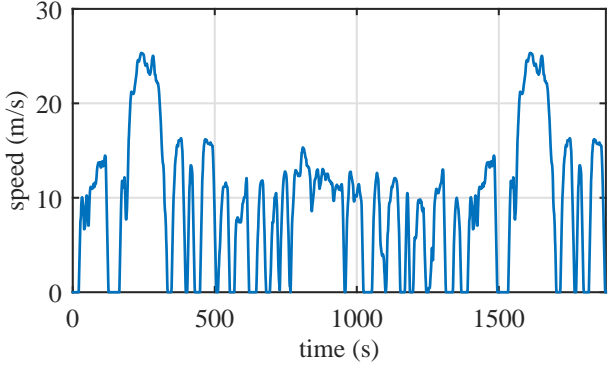}
\caption{\startmodif Urban Driving Cycle. \stopmodif}
\label{fig:UrbanCycle}
\end{figure}

For both driving cycles, a realistic scenario is considered, in which the vehicle has already depleted 50\% of the charge, but has to accomplish another trip before having the possibility to recharge the battery from the grid. Moreover, to make the trade-off between thermal and electric power less trivial, a CNG range-extender is considered.
This choice falsifies the widespread belief that ``electric is cheaper": as a matter of fact, \emph{with reasonable costs of fuel, energy and battery, the usage of a CGN engine is found to be preferable even within the range achievable in full-electric mode.} 

\begin{figure}
\centering
\includegraphics[]{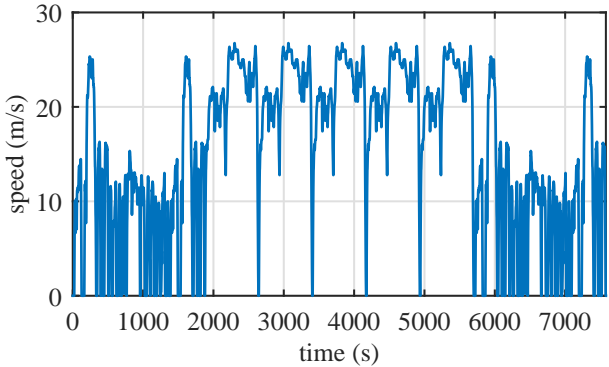}
\caption{\startmodif Combined Driving Cycle. \stopmodif}
\label{fig:CombinedCycle} 
\end{figure}


\section{Total Cost Minimization}
\label{sec:Section3}

The proposed approach is to minimize the overall cost given by the sum of three items: the grid energy for battery recharge, the damage to the battery and the fuel consumed to generate power.
Since these quantities are heterogeneous, the cost function is defined as the sum of the related monetary costs and it is referred to as the \textit{driving cost} hereafter.

In particular, in this section, the new formulation is introduced and the features of its optimal solution are discussed. A real-time control policy will be instead object of the next section.

\subsection{Mathematical Modeling}
In a series HEV, the following electrical power balance holds
\begin{equation}
P_{m}(t) = P_{r}(t) + P_{b}(t),
\label{eq:Link}
\end{equation}
where $P_m$ is the traction motor power, $P_r$ is the generated power and $P_b$ is the battery power, which can be computed as
\begin{equation}
P_{b} = v_{b} i_{b}.
\label{eq:BatteryPower}
\end{equation}

From now on, positive current values will correspond to battery discharging. A circuit model of the battery is needed to relate the voltage $v_b$ to the current $i_b$. The voltage is considered to be the sum of an open circuit term and of an ohmic term, accounting for Joule losses
\begin{equation}
v_{b} = v_{oc} - R_{b} i_{b},
\label{eq:BatteryCircuit}
\end{equation}
where the internal resistance $R_b$ is considered to be constant and the open circuit voltage $v_{oc}$ to be an affine function of the battery state of charge  \cite{guzzella_vehicle_2007}
\begin{equation}
v_{oc}(t) = A_{b} q_b(t) + B_{b},
\label{eq:BatteryOCVAffine}
\end{equation}
where $A_{b}$ and $B_{b}$ are parameters fitting the real open circuit voltage of the battery.
In Figures \ref{fig:battery_ocv} and \ref{fig:battery_r0} the open circuit voltage and the internal resistance of a Li-ion cell, intended for use on a pHEV and measured at different SoC levels, are illustrated. The measured values are compared with the proposed models, as defined in the rest of the section.

\begin{figure}
\centering
\includegraphics[]{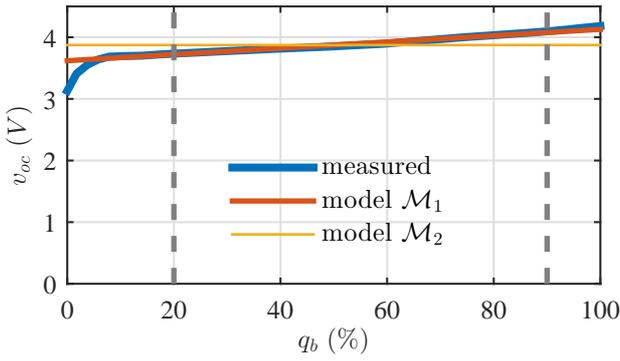}
\caption{Open circuit voltage of a Li-ion cell: measured data vs. proposed models. The accuracy of \eqref{eq:BatteryOCVAffine} and \eqref{eq:BatteryOCVConstant} in describing $v_{oc}$ can here be appreciated.}
\label{fig:battery_ocv}
\end{figure}

The state of charge of the battery $q_b \in \left[0,1\right]$ is defined according to the well established definition \cite{guzzella_vehicle_2007}
\begin{equation}
\dot{q}_{b}(t) = - \frac{i_{b}(t)}{Q_{b}},
\label{eq:BatterySoc}
\end{equation}
\startmodif
where $Q_b$ is the battery capacity, which slowly decreases as the battery ages; generally, the battery is considered \emph{dead} when its capacity has decreased to 80\% of the nominal capacity \cite{Serrao2011}.
Since this decay is very slow, $Q_b$ can be considered constant over a driving cycle of few hours, and can be modeled as
\begin{equation}
Q_{b} = Q_{b}^{nom} (1 - 0.2 \bar{\xi}_{b}),
\label{eq:BatteryCapacityConstant}
\end{equation}
where $Q_{b}^{nom}$ is the battery nominal capacity at the beginning of life and $\bar{\xi}_{b} \in \left[0,1\right]$ is the battery state of health at the beginning of the mission.
\stopmodif
The battery depth of discharge is also introduced as
\begin{equation}
d_{b}(t) = 1 - q_{b}(t).
\label{eq:BatteryDoD}
\end{equation}

\begin{figure}
\centering
\includegraphics[]{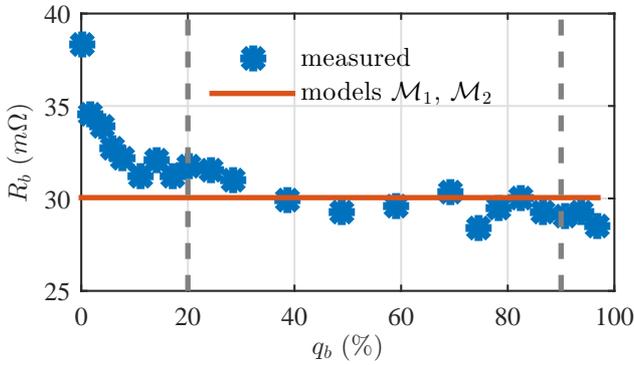}
\caption{Internal resistance of a Li-ion cell: measured data vs. proposed models.}
\label{fig:battery_r0}
\end{figure}

Battery aging is a rather complex phenomenon to model. It is widely discussed in the literature and still lacks a unified approach \cite{ramadesigan2012modeling}. Quite intuitively, aging models are developed for different purposes: manufacturers may be interested in models that accurately describe the underlying electro-chemical processes, and thus can reduce the need for expensive experimental campaigns -- in general, at the cost of high mathematical and computational complexity. On the other hand, simpler aging models are sought when it comes to estimate the battery aging from real world measurements, either to assess the current state of health \cite{jan_experimentally_2012}, or to develop high level strategies \cite{shiau2010optimal}, as it is in our case.

Therefore, the simple but effective aging model \cite{marano2009lithium} is employed
\begin{equation}
\dot{\xi}_{b}(t) = \frac{ \sigma_{b} }{N_b Q_{b}^{nom}} \lvert i_{b}(t) \rvert,
\label{eq:BatterySoh}
\end{equation}
\startmodif
meaning that the depletion of battery life is a function of the current throughput.
\stopmodif
$N_b$ is the number of charge-discharge cycles that the battery can stand over its entire life, in nominal conditions. $\sigma_b$ is a weighting factor, often called \textit{severity factor}, that depends on the operating condition \cite{marano2009lithium}; in our work $\sigma_b$ depends on the battery state of charge and current
\begin{equation}
\sigma_{b} = \sigma_{b}(q_{b}(t),i_{b}(t)).
\label{eq:BatterySigmaVariable}
\end{equation}

Despite its simplicity, this model accounts for battery degradation in a tractable way. A similar model was used in \cite{moura2013battery}, where complex electro-chemical models were reduced to be used in a supervisory controller, coming up with the integral of a static function of current and state of charge and the integral of the current throughput.
A similar approach is also used in \cite{Serrao2011}.
As mentioned in the introduction, \cite{Ebbesen2012} also considers battery aging in the energy management of an HEV. In this case, a different aging model is used, where instead of the severity factor, an explicit function of the battery current is used; however, the aging effects due to the state of charge are not considered with that approach.

The thermal generation unit in a series HEV is mechanically unconstrained from the external world; therefore the mechanical operating point can be arbitrarily chosen as a function of the requested amount of power to generate.
Hereafter it is assumed that a lower level controller continuously adjusts the operating point in a quasi-static manner.
Given the quasi-static efficiency maps of the thermal engine and of the electric generator, the most efficient operating points were computed as a function of the generated electrical power.
If a thermal characterization of the unit was available, this lower level policy could also take into account the effects of thermal transients on fuel economy \cite{arsie2010effects}.
Therefore, under the assumption of quasi-static operation \cite{guzzella_vehicle_2007}, the fuel power and the corresponding flow rate are
\begin{equation}
\begin{split}
P_f(t) &=  \frac{P_{r}(t)}{ \eta_{r}(P_{r}(t))} \delta \left( P_r(t) \right)\\
\dot{m}_{f}(t) &= \frac{P_{f}(t)}{\lambda_{r}}\\
\end{split}
\label{eq:EngineFuelRateMap}
\end{equation}
where $\eta_r$ is the combined efficiency of the Engine-Generator Unit, $\lambda_r$ is the fuel lower heating value and $\delta$ is a Dirac delta.
The above equations imply that the fuel injection is active only when the request of power generation $P_r$ is strictly positive.
The combined efficiency $\eta_r$ is the ratio between generated power $P_r$ and fuel power $P_f$ and is therefore computed from the effciency maps of both the engine and the generator; Figure \ref{fig:egu_efficiency} shows the combined efficiency $\eta_r$ of the Engine-Generator Unit at different power levels.

\begin{figure}
\centering
\includegraphics[]{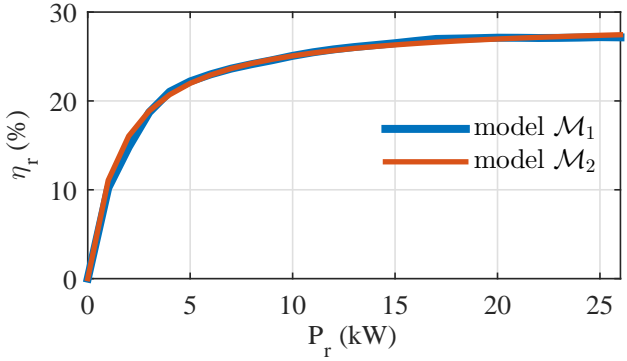}
\caption{ \startmodif Efficiency of the Engine-Generator Unit at different power levels (data taken from the public database ADVISOR \cite{Advisor1999}). \stopmodif}
\label{fig:egu_efficiency}
\end{figure}

The powertrain of the series HEVs is considered to be fully described by the above equations and therefore the following definition is used hereafter.

\begin{definition}
The \textit{full control-oriented model} $\mathcal{M}_1$ of the powertrain of a HEV is defined by the set of Equations~\eqref{eq:Link}-\eqref{eq:EngineFuelRateMap}.
\label{def:FullModel}
\end{definition}

Model $\mathcal{M}_1$ is used to compute the optimal solution by DP in the next subsection, as well as to derive a numeric approximation of the optimal law in Section \ref{sec:Section4}.
By contrast, an explicit formulation of the optimal law is also derived in Section \ref{sec:Section4}, based on a simpler model, with the following differences:
\begin{itemize}
\item The battery open circuit voltage is constant and equal to the nominal voltage, replacing (\ref{eq:BatteryOCVAffine}) with
\begin{equation}
v_{oc}(t) = v_{b}^{nom},
\label{eq:BatteryOCVConstant}
\end{equation}
which is an approximation often used for HEVs and is reasonable if the battery state of charge is kept in a sufficiently narrow range (see again Figure \ref{fig:battery_ocv}).
\item The dependence of the battery severity factor from the state of charge and current is neglected, replacing (\ref{eq:BatterySigmaVariable}) with
\begin{equation}
\sigma_{b} = \bar{\sigma}_{b}(\bar{q}_{b},\bar{i}_{b}),
\label{eq:BatterySigmaConstant}
\end{equation}
where $\bar{q}_{b},\bar{i}_{b}$ represent a nominal operating condition;
\startmodif
this approximation is reasonable if the normal operating ranges of $q_b$ and $i_b$ are sufficiently close to the nominal operating conditions prescribed by the manufacturer.
For instance in \cite{shiau2010optimal,Peterson2010} models of this kind are used and validated for plug-in HEVs; these vehicles (like series HEVs considered here) typically have a quite large $Q_b$, which makes it less likely both to hit the $q_b$ limits and to operate at high C-rate, and thus make a simplified model with constant severity factor reasonable.
\stopmodif
\item The thermal unit is modeled as an affine function of the generated power, replacing (\ref{eq:EngineFuelRateMap}) with
\begin{equation}
\begin{split}
P_f(t) &=  \left( A_r P_{r}(t) + B_r \right) \delta \left( P_r(t) \right)\\
\dot{m}_{f}(t) &= \frac{P_{f}(t)}{\lambda_{r}}\\
\end{split}
\label{eq:EngineFuelRateAffine}
\end{equation}
where $B_r/\lambda_r$ represents the fuel consumption of the idling engine and $A_r$ could be interpreted as the inverse of the generation efficiency. Usually, the efficiency is a function of the generated power, hence the coefficients may be found through a linear fit of the nonlinear model given in \eqref{eq:EngineFuelRateMap}. Moreover, the Willans approach \cite{guzzella_vehicle_2007} models the engine fuel consumption as a linear function of torque; hence, the Willans model corresponds to \eqref{eq:EngineFuelRateAffine}, under the mild assumption that the range engine regime is kept narrow to maximize the efficiency.
\end{itemize}

The simple model is then formally described as follows.

\begin{definition}
The \textit{simplified control-oriented model} $\mathcal{M}_2$ of the powertrain is defined by the set of Equations~\eqref{eq:Link},~\eqref{eq:BatteryCircuit},~\eqref{eq:BatterySoc},~\eqref{eq:BatteryCapacityConstant},\eqref{eq:BatteryDoD},~\eqref{eq:BatterySoh},~\eqref{eq:BatteryOCVConstant},~\eqref{eq:BatterySigmaConstant},
~\eqref{eq:EngineFuelRateAffine}.
\label{def:SimpleModel}
\end{definition}

A formal statement of the problem defined in the previous section is now given.
Let the battery state of charge be the state variable
\begin{equation}
x(t) = q_b(t),
\label{eq:StateChoice}
\end{equation}
the battery current be the input variable
\begin{equation}
u(t) = i_b(t),
\label{eq:InputChoice}
\end{equation}
and the traction motor power be the exogenous disturbance
\begin{equation}
w(t) = P_m(t).
\label{eq:DisturbanceChoice}
\end{equation}

The state function is defined as
\begin{equation}
f(x(t),u(t)) = \dot{q}_b(t),
\label{eq:TCMSStateFunction}
\end{equation}
and the running cost as
\begin{equation}
g(x(t),u(t)) = \alpha v_{b}^{nom} Q_{b} \dot{d}_{b}(t) + \beta v_{b}^{nom} Q_{b}^{nom} \dot{\xi}_{b}(t) + \gamma P_{f},
\label{eq:TCMSIntegralCost}
\end{equation}
which sums three cost items: the grid energy used to recharge the battery, the damage caused to the battery, the fuel consumed to generate power. 
$\alpha, \beta, \gamma$ are respectively the monetary costs of $1 Wh$ of grid energy, $1 Wh$ of battery capacity and $1 Wh$ of fuel energy.
In the equation above, we also included the battery nominal voltage $v_b^{nom}$ and the grid recharging efficiency $\eta_{grid}$.
The first term therefore accounts for the cost of a grid charge \textit{before or after} the trip at hand; the charging phase is thus described by a static model with a constant battery voltage $v_{b}^{nom}$.

Notice that the definitions of $\dot{\xi}_b$ and $\dot{m}_f$ are different for Model $\mathcal{M}_1$ and Model $\mathcal{M}_2$.
Notice also that, although the models of battery aging and of fuel consumption are dynamic models, $\xi_b$ and $m_f$ are not treated as state variables in the control problem statement. This is usually done with $m_f$, but the same treatment can be extended also to $\xi_b$.
As a matter of fact, this is consistent with the approximation of considering $Q_b$ constant as in~\eqref{eq:BatteryCapacityConstant}: the advantage of considering its dependence on $\xi_b$ would be negligible when considering a driving cycle of few hours.
This is also shown in a simulation example in Subsection 3.3.
\emph{By contrast, it is worth considering all the resulting contributions of $\dot{d}_b$, $\dot{\xi}_b$ and $\dot{m}_f$ in the cost function} since, for reasonable scenarios, they share the same order of magnitude.
Since the three quantities are heterogeneous, their monetary costs can be simply summed up, considering the unitary costs $\alpha, \beta, \gamma$.

Considering the constraints,
\begin{equation}
X = [\bar{q}_{min} , \bar{q}_{max}]=[0.2 , 0.9],
\label{eq:TCMSStateSet}
\end{equation}
represent the static bounds for the state variable, while no penalization function on the final state is set, that is
\begin{equation}
h(x(T)) = 0.
\label{eq:TCMSFinalCost}
\end{equation}
Moreover, since the set of admissible final states is not restricted,
\begin{equation}
X_T = X.
\label{eq:TCMSFinalStateSet}
\end{equation}

The time-varying bounds for the control variable $i_b(t)$ are instead
\begin{equation}
U = [u_{min}(t),u_{max}(t)],
\label{eq:TCMSInputSet}
\end{equation}
computed from the corresponding power bounds according to the battery circuit equation
\begin{align}
u_{min}(t) &= \frac{v_{oc} - \sqrt{v_{oc}^2 - 4 R_b P_{min}(t)}}{2 R_b}, \label{eq:TCMSInputMin} \\
u_{max}(t) &= \frac{v_{oc} - \sqrt{v_{oc}^2 - 4 R_b P_{max}(t)}}{2 R_b}. \label{eq:TCMSInputMax}
\end{align}
The time-varying power bounds $P_{min},P_{max}$ ensure that static power bounds of the battery and of the generator are fulfilled with the current value of the traction motor power. In view of \eqref{eq:Link}, they can be computed as:
\begin{align}
P_{min}(t) &= \max \left\lbrace \bar{P}_b^{min} , P_m(t) - \bar{P}_r^{max} \right\rbrace ,\\
P_{max}(t) &= \min \left\lbrace \bar{P}_b^{max} , P_m(t) - \bar{P}_r^{min} , \frac{v_{oc}^2}{4 R_b} \right\rbrace,
\label{eq:TCMSPowerBounds}
\end{align}
where $\bar{P}_b^{min},\bar{P}_b^{max}$ are the battery power limits and $\bar{P}_r^{min},\bar{P}_r^{max}$ are the Engine-Generator unit power limits.
All the model parameters and bounds defined so far are listed in Table \ref{tab:ModelParams}.

Then, the problem of \textit{minimizing the total driving cost over the given driving cycle}, which will be called from now on \textit{Total Cost Minimization Strategy} (TCMS), can be mathematically stated as follows:
\begin{equation}
\label{eq:ourOCP}
\begin{aligned}
& \underset{u}{\text{min}}
& & \int_0^T \left(\alpha v_{b}^{nom} Q_{b} \dot{d}_{b}(t) + \beta v_{b}^{nom} Q_{b}^{nom} \dot{\xi}_{b}(t) + \gamma P_{f}(t) \right) dt \\
& \text{s. t.}
& & \dot{q}_{b}(t) = - \frac{i_{b}(t)}{Q_{b}}\\
&
& & q_b(0) = q_0\\
&
& & q_b(t) \in [0.2,0.9]\\
&
& & i_b(t) \in [u_{min}(t),u_{max}(t)].
\end{aligned}
\end{equation}

\subsection{Benchmark Optimal Solution}
The solution to Problem~\eqref{eq:ourOCP} can be computed off-line based on Dynamic Programming \cite{sniedovich2010dynamic}. Since such an approach does not directly apply to continuous time systems, a discrete-time counterpart of the state function is considered.
By Backward Euler approach, (\ref{eq:BatterySoc}) yields
\begin{equation}
\label{eq:DiscreteQb}
q_{b}(k+1) = q_{b}(k) - \frac{T_s i_{b}(k)}{Q_{b}}, \; k = 0,...,N-1,
\end{equation}
$T_s$ being the sampling time and $T=N T_s$ being the time horizon of the mission.

The cost function $J$ to minimize is redefined in the discrete time domain as
\begin{equation}
h(x(N)) + \sum_{k=0}^{N-1} g(x(k),u(k),w(k)),
\end{equation}
where
\begin{multline}
g(x(k),u(k),w(k)) = \alpha v_{b}^{nom} Q_{b} \frac{d_{b}(k+1)-d_{b}(k)}{T_{s}}\\
+ \beta v_{b}^{nom} Q_{b}^{nom} \frac{\xi_{b}(k+1)-\xi_{b}(k)}{T_{s}} + \gamma \frac{P_{f}(k+1)-P_{f}(k)}{T_{s}}.
\end{multline}
If Backward Euler approach is used, \eqref{eq:BatterySoh},\eqref{eq:EngineFuelRateMap} yield
\begin{gather}
\xi_{b}(k+1) = \xi_{b}(k) + T_s \frac{ \sigma_{b}(k) \lvert i_{b}(k) \rvert }{N_b Q_{b}^{nom}} \label{eq:DiscreteXib}\\
P_{f}(k+1) = P_{f}(k) + T_s \frac{P_{r}(k)}{ \eta_{r}(P_{r}(k)) \delta(P_r(k))} \label{eq:DiscretePf}
\end{gather}

According to Dynamic Programming, the optimal cost for a given initial state $J^*(x(0))$ is found at the last iteration of the following algorithm, proceeding backwards in time from $k=N-1$ to $k=0$
\begin{equation}
\begin{aligned}
J_N(x(N)) &= h(x(N))\\
J_{k}(x(k)) &= \min_u \lbrace g(x(k),u(k),w(k)) \\
&+ J_{k+1}(f(x(k),u(k),w(k)) \rbrace .
\end{aligned}
\end{equation}
The optimal control policy $\mathcal{\pi}_{DP} = \lbrace \mu_0^*, ... \mu_{N-1}^* \rbrace$ is then found as
\begin{multline}
u^*(k) = \mu_{k}^{*}(x(k)) = \arg \min_u \lbrace g(x(k),u(k),w(k)) \\ 
+ J_{k+1}(f(x(k),u(k),w(k)) \rbrace  , \forall k , \forall x(k).
\end{multline}

Consider now the simulation environment defined in Section \ref{sec:simulator}.
For the Urban Driving Cycle, the values of the optimal cost-to-go function $J_k(x(k))$ are depicted in Fig.~\ref{fig:JmapUrban}. Intuitively, $J_k(x(k))$ mainly grows when moving backwards in time (\textit{i.e.} in distance traveled); the dependence on the state of charge is minor, because no penalty for the final state of charge is set and the considered trip is within the electric range.

\begin{figure}
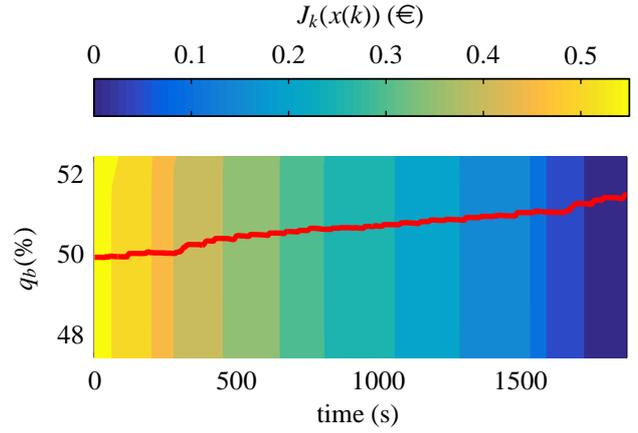

\centering
\psfragfig{ftp75_tcms_cost2go}
\caption{\startmodif Optimal Cost-to-go function $J_k(x(k))$ with the TCMS in the Urban Driving Cycle. \stopmodif}
\label{fig:JmapUrban}
\end{figure}

The optimal state trajectory, given an initial state of charge $q_b(0)=0.5$, is highlighted with a red line.
\startmodif
Fig.~\ref{fig:UmapUrban} shows a portion of the corresponding engine operation.
As the figure shows, the engine outputs power during the vehicle accelerations, while it is switched off when the vehicle speed (i.e. the demanded $P_m$) starts decreasing.
\stopmodif

\begin{figure}
\centering
\includegraphics[]{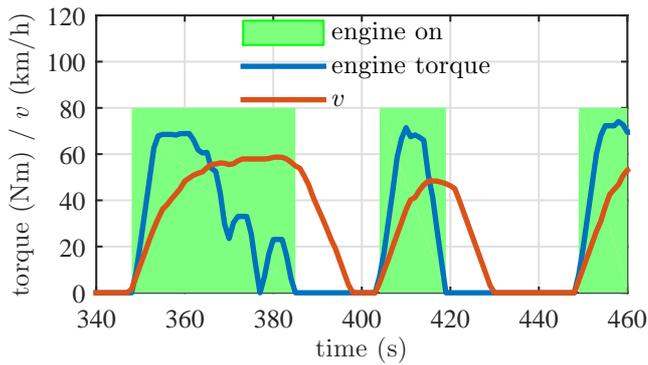}
\caption{\startmodif Engine torque and vehicle speed $v$ with the TCMS in a portion of the Urban Driving Cycle. The shaded areas indicate where the engine is on. \stopmodif}
\label{fig:UmapUrban}
\end{figure}

The TCMS approach is compared to the approaches presented in Section~\ref{sec:Section2}.
\startmodif
Analogously to the TCMS formulation, also the other approaches are implemented using Dynamic Programming; details on the formulation are in \ref{sec:AppendixC}.
As noted above, the simulation scenario encompasses a CNG range-extender, which makes the trade-off between thermal and electric power non trivial.
\stopmodif
The trends of the battery state of charge under the Urban Driving Cycle are compared in Fig.~\ref{fig:BenchmarkUrban}.
The corresponding final values of the total driving cost and fuel consumption are given in Table~\ref{tab:BenchmarkUrban}.

The Full Electric strategy depletes more than 5\% of the charge, the Charge Sustaining strategy and the ECMS attain the same final state of charge, while the TCMS performs a slight recharge of about 2\%.
Notice that the ECMS was tuned on this driving cycle to obtain perfect charge sustenance.

The TCMS attains the minimum driving cost, the Full Electric strategy is nearly 50\% more expensive while the Charge Sustaining strategy and the ECMS have about 30\% higher cost.
Despite some deviations in the trend of the state of charge, the Charge Sustaining strategy and the ECMS attain the same cost and almost the same fuel consumption.
On the other hand, the TCMS uses the most fuel, while the Full Electric simply keeps the engine off for all the cycle.
\startmodif
Notice that different combinations of the cost coefficients yield a different balance between thermal and electrical power; for instance, it is intuitive that, if $\alpha=\beta=0$, the TCMS behaves as the Full Electric strategy.
For any combination, the TCMS attains at least the same performance (in terms of monetary cost) as one of the other approaches.
The interested reader can find in Section~\ref{sec:Section5} a sensitivity analysis of the performance of the proposed approach for a broad range of fuel prices.
\stopmodif

\begin{figure}
\centering
\includegraphics[]{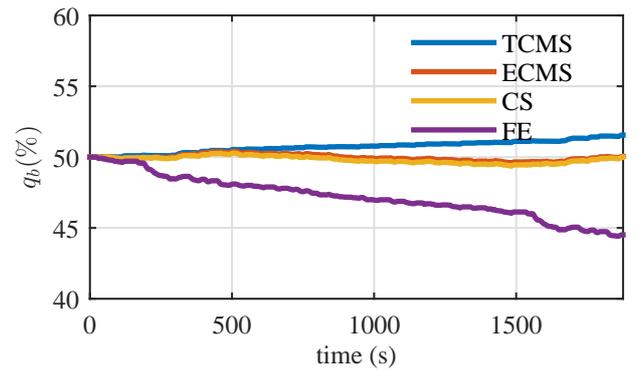}
\caption{Battery state of charge in the Urban Driving Cycle according to the considered approaches: TCMS, Charge Sustaining, ECMS, Full Electric.}
\label{fig:BenchmarkUrban}
\end{figure}

\begin{table}
\centering
\caption{Driving cost and fuel consumption in the Urban Driving Cycle according to the considered approaches: TCMS, Charge Sustaining, ECMS, Full Electric.}
\begin{tabular}{ccccc}
\toprule
 & TCMS & ECMS & CS & FE \\
\midrule
Driving Cost $[\text{\euro}]$ 	& 0.57 		& 0.74 		& 0.74 		& 0.82 \\
Fuel Mass $[g]$ 				& 484.29 	& 343.56 	& 340.65 	& 1.49 \\
\bottomrule
\end{tabular}
\label{tab:BenchmarkUrban}
\end{table}

The same comparison is carried out in Fig.~\ref{fig:BenchmarkCombined} and Table~\ref{tab:BenchmarkCombined}, in the case of the Combined Driving Cycle. The main observations are as follows.

First of all, since such a cycle is beyond the electric range of the vehicle, also the Full Electric strategy needs the thermal unit to complete the trip; nonetheless, it completely depletes the battery charge.
In the Combined Driving Cycle, the TCMS recharges the battery by about 5\%. Some features are analogous to the Urban Driving Cycle: the Charge Sustaining strategy and the ECMS attain the same final state of charge, although with different transients; the TCMS is the cheapest strategy; the Charge Sustaining strategy and the ECMS attain almost the same results and Full Electric strategy is the most expensive.
In terms of fuel consumption, the TCMS uses the most fuel, the Charge Sustaining strategy and the ECMS have similar results and the Full Electric strategy attains the lowest consumption.

\begin{figure}
\centering
\includegraphics[]{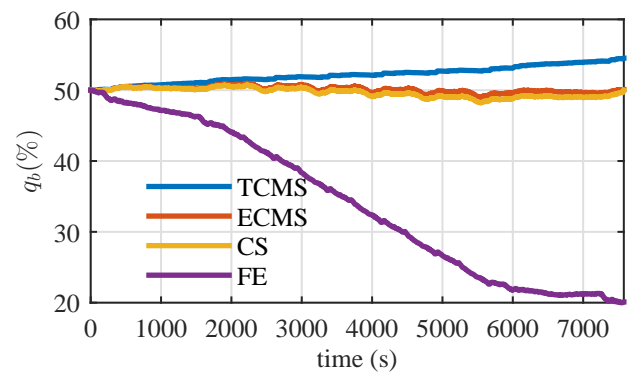}
\caption{Battery state of charge in the Combined Driving Cycle according to the considered approaches: TCMS, Charge Sustaining, ECMS, Full Electric.}
\label{fig:BenchmarkCombined}
\end{figure}

\begin{table}
\centering
\caption{Driving cost and fuel consumption in the Combined Driving Cycle according to the considered approaches: TCMS, Charge Sustaining, ECMS, Full Electric.}
\begin{tabular}{ccccc}
\toprule
 & TCMS & ECMS & CS & FE \\
\midrule
Driving Cost $[\text{\euro}]$ & 3.07 & 4.24 & 4.25 & 4.59 \\
Fuel Mass $[g]$ & 2723.4 & 2196.8 & 2192.8 & 520.0 \\
\bottomrule
\end{tabular}
\label{tab:BenchmarkCombined}
\end{table}

From the above results, a couple of interesting facts can be inferred:
\begin{itemize}
	\item it is a common belief that, when a driving cycle is fully achievable in Full electric mode, this is the least costly policy, according to the assumption that ``electric is cheaper". This is not always true, and in particular it is not in the considered \textit{scenario}.
	\item the optimal usage of the ICE, in terms of monetary cost, is not trivial.
\end{itemize}

As a final remark, it is reasonable to expect that in the near future -- thanks to technological advances -- range extender will be a more flexible component of the powertrain; notable examples are multi-fuel engines and plug-in range extenders. As a future task, the proposed approach could quite easily be extended to account for cases such as (i) only expensive fuel is available in the first part of the trip whereas (after a gas station is reached) it becomes cheaper in the second part, and (ii) the range extender is only available in a portion of the trip. In \textit{scenarios} like the ones above, the proposed formulation in terms of monetary cost looks more appropriate and natural than more traditional formulations in terms of energy consumption.

\subsection{State variable choice}

In this subsection the choice of using only one state variable ($q_b$) instead of two ($q_b$ and $\xi_b$) is motivated by means of a simulation.
We show that in the scenario considered above, the effect of considering $\xi_b$ as a state variable is negligible.
This is accomplished modifying the problem formulation for the implementation of the DP:
\begin{itemize}
\item $\xi_b$ is considered as a second state variable
\item the capacity in equation \eqref{eq:BatterySoc} is computed as $Q_{b} = Q_{b}^{nom} (1 - 0.2 \xi_{b}(t))$
\end{itemize}

The results found with this formulation are compared with the proposed formulation in Figure \ref{fig:AgingState} and in Table \ref{tab:AgingState}: no significant difference is found neither in the state of charge trend, nor in the overall cost and fuel mass.

\begin{figure}
\centering
\includegraphics[]{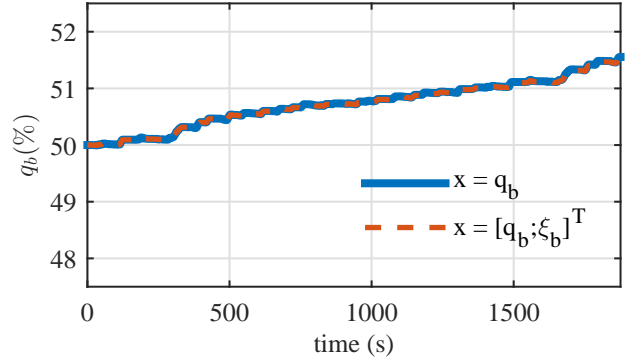}
\caption{Battery state of charge in the Urban Driving Cycle with the TCMS approach: DP implementation with 1 and 2 state variables.}
\label{fig:AgingState}
\end{figure}

\begin{table}
\centering
\caption{Driving cost and fuel consumption in the Urban Driving Cycle with the TCMS approach: DP implementation with 1 and 2 state variables.}
\begin{tabular}{ccc}
\toprule
 & $x=q_b$ & $x=[q_b;\xi_b]^T$ \\
\midrule
Driving Cost $[\text{\euro}]$ & 0.57 & 0.57 \\
Fuel Mass $[g]$ & 484.29 & 483.24 \\
\bottomrule
\end{tabular}
\label{tab:AgingState}
\end{table}


\section{Real-Time Optimal Control}
\label{sec:Section4}

The results obtained in the previous section rely on the \textit{a-priori} knowledge of the driving cycle. Therefore, the optimal trajectory of $i_b(t)$ cannot be implemented in real-time.

In this section, an analytic and a numerical causal control policy solving Problem~\eqref{eq:ourOCP} are presented. 
By relying on the Pontryagin's Minimum Principle \cite{bertsekas1995dynamic}, performance close to the optimum are attained also in real-time, without knowledge of the driving cycle. At the end of this section, such policies are compared with the benchmark results presented in Section~\ref{sec:Section3}.

\subsection{Unconstrained Explicit Optimal Control Law}

A result valid for the Model $\mathcal{M}_2$ is first given discarding the constraints $u_{min}, u_{max}$ on the control variable.
In the statement we refer to the adjoint state for the optimal control problem: a discussion on this variable is given in Subsection 4.4.

\begin{thm}
Consider Model $\mathcal{M}_2$ given in Definition~\ref{def:SimpleModel}.
The optimal battery current ${i}_b^o$ is a function of motor power $P_{mot}$ and of the adjoint state $p$
\begin{equation}
{i}_b^o = 
\left\lbrace
\begin{split}
{i}_b^{o,1} &= \left( \frac{p}{Q_b} - \alpha v_{oc} + \beta \frac{\sigma_b v_{oc}}{N_b} + \gamma A_r v_{oc} \right) \frac{1}{2 A_r R_b \gamma} \\
&\qquad if \, P_m > 0 \cap P_m > P_{lim}^{(1,4)}(p) \cap p<p_{lim}^{(1,2)} \\
{i}_b^{o,2} &= 0 \\
&\qquad if \, P_m > 0 \cap P_m > P_{lim}^{(2,4)}(p) \cap p_{lim}^{(1,2)}<p<p_{lim}^{(2,3)} \\
{i}_b^{o,3} &= \left( \frac{p}{Q_b} - \alpha v_{oc} - \beta \frac{\sigma_b v_{oc}}{N_b} + \gamma A_r v_{oc} \right) \frac{1}{2 A_r R_b \gamma} \\
&\qquad if \, P_m > 0 \cap P_m > P_{lim}^{(3,4)}(p) \cap p>p_{lim}^{(2,3)} \\
{i}_b^{o,1} &= \left( \frac{p}{Q_b} - \alpha v_{oc} + \beta \frac{\sigma_b v_{oc}}{N_b} + \gamma A_r v_{oc} \right) \frac{1}{2 A_r R_b \gamma} \\
&\qquad if \, P_m < 0 \cap P_m > P_{lim}^{(1,5)}(p) \cap p<p_{lim}^{(1,2)} \\
{i}_b^{o,4} &= {i}_b^{o,5} = \left( v_{oc} - \sqrt{ v_{oc}^2 - 4 R_b P_{m}} \right) \frac{1}{2 R_b} \\
&\qquad else \\
\end{split}
\right.
\label{eq:OptimalCurrent}
\end{equation}
where
\begin{equation}
\begin{split}
p_{lim}^{(1,2)} &= Q_b \left( \alpha v_{oc} - A_r \gamma v_{oc} - \frac{\sigma_b \beta v_{oc}}{N_b} \right) \\
p_{lim}^{(2,3)} &= Q_b \left( \alpha v_{oc} - A_r \gamma v_{oc} + \frac{\sigma_b \beta v_{oc}}{N_b} \right) \\
\end{split}
\end{equation}
and the power limits separating the different regions are given in Equation \eqref{eq:PowerLimits} in Appendix B.
\end{thm}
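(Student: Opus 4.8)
The plan is to invoke Pontryagin's Minimum Principle for the scalar problem~\eqref{eq:ourOCP} specialized to Model $\mathcal{M}_2$, and to obtain $i_b^o$ as the pointwise minimizer over $i_b$ of the Hamiltonian $H = g + p\,f$, where $f = -i_b/Q_b$ is the state function~\eqref{eq:TCMSStateFunction} and $p$ is the costate. First I would substitute the $\mathcal{M}_2$ constitutive relations into the running cost~\eqref{eq:TCMSIntegralCost}: using $\dot d_b = i_b/Q_b$, the constant open-circuit voltage~\eqref{eq:BatteryOCVConstant}, the constant severity factor~\eqref{eq:BatterySigmaConstant} together with~\eqref{eq:BatterySoh}, and the affine fuel law~\eqref{eq:EngineFuelRateAffine} with $P_r = P_m - v_{oc} i_b + R_b i_b^2$ obtained from~\eqref{eq:Link}--\eqref{eq:BatteryCircuit}. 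This renders $H$ an explicit function of $i_b$ (for fixed $p$ and $P_m$) assembled from three pieces: a term linear in $i_b$ (the grid-energy cost plus the costate contribution $-p\,i_b/Q_b$), the aging term proportional to $|i_b|$, and the fuel term, which is a convex quadratic in $i_b$ when the engine is on ($P_r > 0$) but vanishes when the engine is off, exhibiting a downward jump of magnitude $\gamma B_r$ at the switching boundary due to the idling consumption.

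Next I would minimize $H$ by exploiting this structure. On the engine-on branch $H$ is a convex quadratic on each side of the kink at $i_b = 0$; writing the stationarity condition $\partial H/\partial i_b = 0$ separately for $i_b < 0$ (where $|i_b| = -i_b$) and for $i_b > 0$ (where $|i_b| = i_b$) yields exactly the two candidate laws $i_b^{o,1}$ and $i_b^{o,3}$, differing only in the sign of the aging coefficient $\beta\,\sigma_b v_{oc}/N_b$. Since $|i_b|$ is non-differentiable at the origin, the minimizer stays at $i_b = 0$ (the candidate $i_b^{o,2}$) precisely when the subdifferential of $H$ contains zero; requiring $i_b^{o,1} \ge 0$ and $i_b^{o,3} \le 0$ simultaneously gives the costate thresholds $p_{lim}^{(1,2)}$ and $p_{lim}^{(2,3)}$, obtained by setting the numerators of $i_b^{o,1}$ and $i_b^{o,3}$ to zero, and these reproduce the stated expressions. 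The engine-off candidate is forced by the feasibility condition $P_r = 0$, i.e. $R_b i_b^2 - v_{oc} i_b + P_m = 0$, whose admissible (smaller) root is $i_b^{o,4} = i_b^{o,5} = (v_{oc} - \sqrt{v_{oc}^2 - 4 R_b P_m})/(2 R_b)$.

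Finally I would decide, for each pair $(p, P_m)$, which candidate is the global minimizer. Within the engine-on family the three regions are selected by $p$ relative to $p_{lim}^{(1,2)}$ and $p_{lim}^{(2,3)}$ as above. The remaining question is whether the best engine-on value beats the engine-off value $H(i_b^{o,4})$: because of the idling jump $\gamma B_r$, the engine should be kept running only when the power demand is large enough. Equating the Hamiltonian evaluated at the relevant engine-on candidate with its value at $i_b^{o,4}$ and solving for $P_m$ produces the power limits $P_{lim}^{(1,4)}, P_{lim}^{(2,4)}, P_{lim}^{(3,4)}$ collected in Appendix B, while the sign split on $P_m$ distinguishes traction ($P_m > 0$) from regenerative braking ($P_m < 0$), the latter yielding $P_{lim}^{(1,5)}$ and the branch $i_b^{o,5}$. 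The main obstacle is precisely this last step: the indicator (Dirac) fuel term makes $H$ discontinuous across $P_r = 0$, so first-order conditions alone are insufficient and one must compare Hamiltonian \emph{values} across the switch; carrying out these comparisons cleanly, and checking that each reproduces the claimed region, is the delicate bookkeeping, whereas the stationarity computations for $i_b^{o,1}, i_b^{o,3}$ and the threshold identities are routine.
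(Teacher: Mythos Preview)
Your proposal is correct and follows essentially the same route as the paper: both apply Pontryagin's Minimum Principle, write the Hamiltonian for Model $\mathcal{M}_2$ as a piecewise function (engine on with $i_b<0$, $i_b=0$, $i_b>0$; engine off with $P_m\gtrless 0$), obtain $i_b^{o,1}$ and $i_b^{o,3}$ from the stationarity of the convex quadratic branches, extract the costate thresholds $p_{lim}^{(1,2)},p_{lim}^{(2,3)}$ from the sign change of these candidates at $i_b=0$, and finally derive the power limits by equating engine-on and engine-off Hamiltonian values. The only cosmetic difference is that you phrase the $i_b=0$ case via the subdifferential of $|i_b|$, whereas the paper takes the one-sided limits of $\nabla_u H_1$ and $\nabla_u H_3$ as $i_b\to 0^\mp$; these are equivalent.
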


\begin{proof}
The proof refers to Pontryagin's Minimum Principle \cite{bertsekas1995dynamic}.
Define the Hamiltonian function as
\begin{equation}
H(q_b,i_b,p) = g(q_b,i_b) + p^T f(q_b,i_b),
\end{equation}
where $p$ is a dynamic variable, often referred to as \textit{adjoint state}, obeying to
\begin{equation}
\dot{p}(t) = -\nabla_{q_b} H(q_b^{*}(t),i_b^{*}(t),p(t)),
\end{equation}
and subject to the boundary condition
\begin{equation}\label{eq:boundary}
p(T) = \nabla_{q_b} h(q_b^{*}(T)).
\end{equation}
The Hamiltonian reads
\begin{equation}
H(t) = 
\left\lbrace
\begin{split}
H_1(t) \quad &if \quad P_b \leq 0 \leq P_m - P_b\\
H_2(t) \quad &if \quad P_b = 0 \leq P_m\\
H_3(t) \quad &if \quad 0 \leq P_b \leq P_m\\
H_4(t) \quad &if \quad P_b = P_m \geq 0 \\
H_5(t) \quad &if \quad P_b = P_m \leq 0\\
\end{split}
\right.
\end{equation}
where
\begin{equation}
\begin{split}
H_1 &= \alpha v_{oc} i_{b} - \frac{\beta \sigma_b v_{oc} i_{b}}{N_b} + \gamma \left( A_r \left(P_{m} - i_{b} v_{oc} + R_b i_{b}^2 \right) + B_r \right)  - \frac{p i_{b}}{Q_b}\\
H_2 &= \gamma \left( A_r P_{m} + B_r \right)\\
H_3 &= \alpha v_{oc} i_{b} + \frac{\beta \sigma_b v_{oc} i_{b}}{N_b} + \gamma \left( A_r \left(P_{m} - i_{b} v_{oc} + R_b i_{b}^2 \right) + B_r \right)  - \frac{p i_{b}}{Q_b}\\
H_4 &= \alpha v_{oc} i_{b} + \frac{\beta \sigma_b v_{oc} i_{b}}{N_b} - \frac{p i_{b}}{Q_b} \\
H_5 &= \alpha v_{oc} i_{b} - \frac{\beta \sigma_b v_{oc} i_{b}}{N_b} - \frac{p i_{b}}{Q_b}\\
\end{split}
\end{equation}
By minimizing the Hamiltonian, the optimal control law is found as
\begin{equation}
\begin{split}
i_b^{o,1} &= \left( \frac{p}{Q_b} - \alpha v_{oc} + \beta \frac{\sigma_b v_{oc}}{N_b} + \gamma A_r v_{oc} \right) \frac{1}{2 A_r R_b \gamma} \\
i_b^{o,2} &= 0 \\
i_b^{o,3} &= \left( \frac{p}{Q_b} - \alpha v_{oc} - \beta \frac{\sigma_b v_{oc}}{N_b} + \gamma A_r v_{oc} \right) \frac{1}{2 A_r R_b \gamma} \\
i_b^{o,4} &= i_b^{o,5} = \left( v_{oc} - \sqrt{ v_{oc}^2 - 4 R_b P_{m}} \right) \frac{1}{2 R_b} \\
\end{split}
\end{equation}
where $i_b^{o,2}, i_b^{o,4} i_b^{o,5}$ come directly from the definition, while $i_b^{o,1}, i_b^{o,3}$ are found analyzing the first and second derivatives of the Hamiltonian
\begin{equation}
\begin{split}
\nabla_u H_1 &= \alpha v_{oc} - \frac{\beta \sigma_b v_{oc}}{N_b} - \gamma A_r \left(v_{oc} - 2 R_b i_{b}\right) - \frac{p}{Q_b}\\
\nabla_u H_3 &= \alpha v_{oc} + \frac{\beta \sigma_b v_{oc}}{N_b} - \gamma A_r \left(v_{oc} - 2 R_b i_{b}\right) - \frac{p}{Q_b}\\
\nabla_u^2 H_1 &= 2 A_r R_b \gamma > 0 \\
\nabla_u^2 H_3 &= 2 A_r R_b \gamma > 0 \\
\end{split}
\end{equation}
The second derivatives are both positive because ${A_r} , {R_b} , {\gamma}$ are positive parameters, therefore, the Hamiltonian is convex.
The boundaries between modes 1, 2 and 3 are found studying the limits of the above gradients when $i_b$ approaches zero
\begin{equation}
\begin{split}
\nabla_u H_1 &\xrightarrow{i_b \rightarrow 0^-} \alpha v_{oc} - \frac{\beta \sigma_b v_{oc}}{N_b} - \gamma A_r v_{oc} - \frac{p}{Q_b}\\
\nabla_u H_3 &\xrightarrow{i_b \rightarrow 0^+} \alpha v_{oc} + \frac{\beta \sigma_b v_{oc}}{N_b} - \gamma A_r v_{oc} - \frac{p}{Q_b}\\
\end{split}
\end{equation}
Since $\beta, \sigma_b, v_{oc}, N_b$ are positive and $H_1, H_3$ are convex, then $\nabla_u H_3 > \nabla_u H_1$.
For the same reason, the minimum among $H_1,H_2,H_3$ is found studying the sign of the two expressions above.
$H_1$ is minimum when $\nabla_u H_1 > 0 \cap \nabla_u H_3 > 0$, i.e.
\begin{equation}
p < p_{lim}^{(1,2)} = Q_b \left( \alpha v_{oc} - A_r \gamma v_{oc} - \frac{\sigma_b \beta v_{oc}}{N_b} \right)
\end{equation}
$H_3$ is minimum when $\nabla_u H_1 < 0 \cap \nabla_u H_3 < 0$, i.e.
\begin{equation}
p > p_{lim}^{(2,3)} =  Q_b \left( \alpha v_{oc} - A_r \gamma v_{oc} + \frac{\sigma_b \beta v_{oc}}{N_b} \right)
\end{equation}
$H_2$ is minimum when $\nabla_u H_1 < 0 \cap \nabla_u H_3 > 0$, i.e.
\begin{equation}
p_{lim}^{(1,2)} < p < p_{lim}^{(2,3)}
\end{equation}

Consider first the case $P_m>0$.
In each of the three regions just defined, the optimal Hamiltonian among $H_1,H_2,H_3$ is compared to $H_4$; the power limits $P_{lim}^{(1,4)}, P_{lim}^{(2,4)}, P_{lim}^{(3,4)}$ are the boundaries of the pure electric mode in each region
\begin{equation}
\begin{split}
p < p_{lim}^{(1,2)} \rightarrow P_{lim}^{(1,4)}=\lbrace P_m | H_1<H_4 \rbrace \\
p_{lim}^{(1,2)} < p < p_{lim}^{(2,3)} \rightarrow P_{lim}^{(2,4)}=\lbrace P_m | H_2<H_4 \rbrace \\
p > p_{lim}^{(2,3)} \rightarrow P_{lim}^{(3,4)}=\lbrace P_m | H_3<H_4 \rbrace \\
\end{split}
\end{equation}

Consider now the case $P_m<0$.
Modes 2 and 3 are not feasible since generated power must be non negative.
Therefore, when $p > p_{lim}^{(1,2)}$ the optimal Hamiltonian is $H_5$, otherwise both $H_1$ and $H_5$ are feasible candidates.
The line dividing the two regions where $H_1$ and $H_5$ are optimal is
\begin{equation}
p < p_{lim}^{(1,2)} \rightarrow  P_{lim}^{(1,5)}=\lbrace P_m | H_1<H_5 \rbrace \\
\end{equation}

\end{proof}

The explicit optimal control law defines five different regions, each having a different expression of the optimal current in terms of the motor power and the adjoint state.
The different regions are shown in Figure \ref{fig:expl_map_unconstr} when considering the powetrains parameters used in our simulations.

\begin{figure}
\centering
\includegraphics[width=0.9\columnwidth]{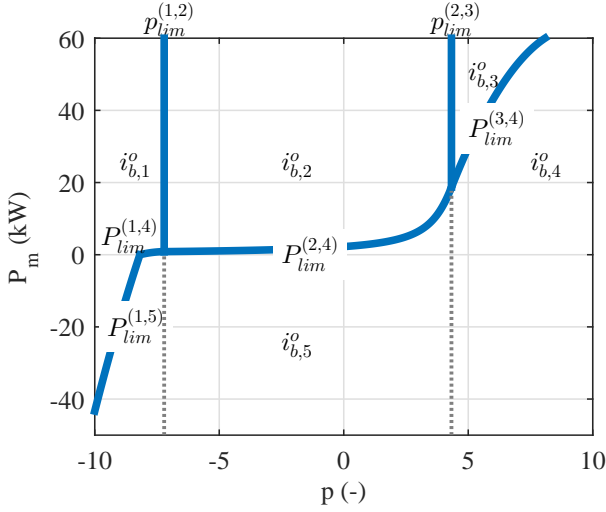}
\caption{Regions of the explicit optimal control policy (unconstrained case).}
\label{fig:expl_map_unconstr}
\end{figure}

\subsection{Constrained Explicit Optimal Control Law}

A result valid for Model $\mathcal{M}_2$ is now given including the constraints $u_{min}, u_{max}$ on the control variable, as defined in Equations \eqref{eq:TCMSInputMin}, \eqref{eq:TCMSInputMax}.
These constraints reflect the bounds on both the battery and the engine power.
The map of the constrained optimal control is depicted in Figure \ref{fig:expl_map_constr}.
The derivation of the map follows directly from the application of the power bounds to the unconstrained map.
Notice that in this case the system has 7 possible modes, instead of 5 in the unconstrained case.
For this reason, the new limits $(P_{b}^{min}+P_{g}^{max}),P_{lim}^{(1,*)},P_{g}^{max},P_{lim}^{(3,*)},P_{b}^{max},P_{lim}^{(1,5 *)}$ arise.
As for the adjoint state, the limits $p_{lim}^{(1)}, p_{lim}^{(3)}$ arise as well.
Notice that, for $p<p_{lim}^{(1)}$, the boundary between thermal and pure electric modes is defined by $P_{lim}^{(1,5 *)}$ which is computed as $P_{lim}^{(1,5 )}$ but considering that in this condition the battery power is saturated $P_b=P_b^{min}$.

In the regions when constraints are active, the saturated currents $i_{min},i_{max}$ are defined as $u_{min},u_{max}$ given in \eqref{eq:TCMSInputMin}, \eqref{eq:TCMSInputMax}. Where the battery power is saturated to $\bar{P}_b^{min}$ it is $i_{min}(\bar{P}_b^{min})=u_{min}(\bar{P}_b^{min})$, while the maximum power bound $\bar{P}_b^{max}$ is reached, it becomes $i_{max}(\bar{P}_b^{max})=u_{max}(\bar{P}_b^{max})$. Instead, when the generator power is saturated to $\bar{P}_r^{max}$, the optimal current becomes $i_{min}(P_m-\bar{P}_r^{max})=u_{min}(P_m-\bar{P}_r^{max})$.

\begin{figure}
\centering
\includegraphics[width=0.97\columnwidth]{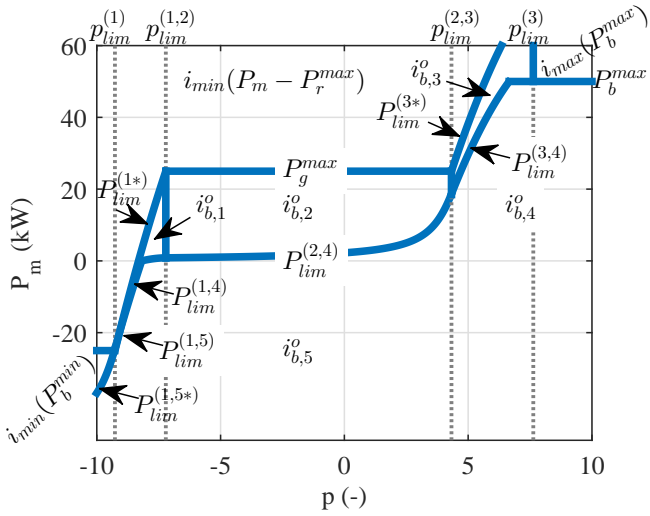}
\caption{Regions of the explicit optimal control policy (constrained case).}
\label{fig:expl_map_constr}
\end{figure}

In the map depicted in Figure \ref{fig:expl_map_constr}, different bounds on the battery and engine power would change the constraints active in each mode, thus changing the shape of the map.

An alternative way of deriving the optimal control has also been proposed for an ECMS policy in \cite{Ambuhl2010}: the Hamiltonian is evaluated for the control candidates $\mathcal{U} = \lbrace i_1^o , i_2^o , i_3^o , i_4^o , i_5^o , i_{min} , i_{max} \rbrace$ and the minimization is performed on these candidates only
\begin{equation}
i_b^o = \arg \min_{u\in\mathcal{U}} \lbrace H \rbrace
\end{equation}
This approach has the advantage of using the results of the explicit law for drastically reducing the number of control candidates, with respect to a standard numeric minimization, like the one presented in the next subsection.
On the other hand, the complex analytic expressions of each region in terms of motor power and adjoint state are not needed.

In the remainder of the paper we indicate with $\mathcal{\pi}_{X}$ the control policy depicted described in this subsection.

\subsection{Constrained Numeric Optimal Control Law}

The results presented in this section so far are referred to Model $\mathcal{M}_2$.
We have shown that in that case, the optimal control law can be expressed as an explicit function of the adjoint state and of the disturbance.
As already mentioned in Section \ref{sec:Section3}, Model $\mathcal{M}_2$ introduces some approximation in the battery open circuit voltage, in the battery severity factor and in the engine efficiency.
In this work, the criticality of these approximations is analyzed in the simulations at the end of this section; in a generic application, it has to be verified based on measured data on the system.

In this subsection, we show how to compute the optimal control law when Model $\mathcal{M}_2$ is not representative enough, and the more accurate Model $\mathcal{M}_1$ has to be used.
In such a case, the optimal current is found by numerical minimization of the Hamiltonian
\begin{equation}
i_b^o = \arg \min_{u\in U} \lbrace H \rbrace
\end{equation}
where $U$ is the feasible input set (considering the constraints $u_{min}, u_{max}$) and the Hamiltonian is 
\begin{equation}
H = \alpha v_b^{nom} i_{b} + \frac{\beta \sigma_b v_b^{nom} \lvert i_{b} \rvert}{N_b} + \gamma P_f(P_m,i_b,q_b) - \frac{p i_{b}}{Q_b}
\label{eq:HamiltonianFull}
\end{equation}
Notice that in this case $v_{oc}$ is an affine function of the state and $\sigma_b$ is a function of the state and the battery current.

The policy derived with this approach is referred to as policy $\mathcal{\pi}_{N}$ hereafter.

\subsection{Adjoint State}

For policy $\mathcal{\pi}_{X}$ (both unconstrained and constrained), i.e. when Model $\mathcal{M}_2$ is considered, the adjoint state is  constant
\begin{equation}
\dot{p} = 0
\end{equation}
while for policy $\mathcal{\pi}_{N}$, i.e. when Model $\mathcal{M}_1$ is considered, the adjoint state is subject to the dynamic equation
\begin{equation}
\begin{split}
\dot{p}(t) 	&= -\nabla_{q_b} g({q_b}(t),i_b(t)) - p(t)' \nabla_{q_b} f(q_b(t),i_b(t))\\
			&= -\frac{\beta v_b^{nom}}{N_b} \lvert i_b(t) \rvert \nabla_{q_b}\sigma_b(t) - \gamma \nabla_{q_b} P_f(t).\\
\end{split}
\label{eq:AdjointStateFull}
\end{equation}

The proof of these statements comes directly from the application of Pontryagin's Minimum Principle.

When the final state constraint is not active, the boundary condition for the adjoint state is simply
\begin{equation}
p(T) = \nabla_{q_b} h = 0
\end{equation}

Hereafter, we take the approximation $\dot{p} \approx 0$ to compute the approximated policy $\mathcal{\pi}_N$ in a causal framework.

\subsection{Simulation Results}
The two proposed real-time implementations $\mathcal{\pi}_X, \mathcal{\pi}_N$ of the TCMS were compared to the non-causal implementation $\mathcal{\pi}_{DP}$ presented in Section~\ref{sec:Section3}.
The trends of the battery state of charge under the Urban Driving Cycle are compared in Fig.~\ref{fig:ControlUrban}.
The corresponding final values of the total driving cost and fuel consumption are given in Table~\ref{tab:ControlUrban}.
Apparently policy $\mathcal{\pi}_N$ approximates with high accuracy the results of the Dynamic Programming.
Also the explicit policy $\mathcal{\pi}_X$ gives very close results, with negligible difference in the final cost and an increase in fuel consumption of about 3\%; this is reflected in a slightly higher (about 0.13\%) state of charge at the end of the cycle.

\begin{figure}
\centering
\includegraphics[]{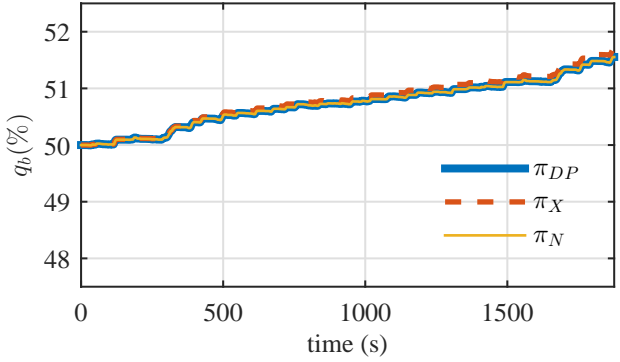}
\caption{Battery state of charge for the Urban Driving Cycle: benchmark solution ($\mathcal{\pi}_{DP}$), analytical real-time solution ($\mathcal{\pi}_{X}$) and numerical real-time solution ($\mathcal{\pi}_{N}$) of Problem \eqref{eq:ourOCP}.}
\label{fig:ControlUrban}
\end{figure}

\begin{table}
\centering
\caption{Driving cost and fuel consumption for the Urban Driving Cycle: benchmark solution ($\mathcal{\pi}_{DP}$), analytical real-time solution ($\mathcal{\pi}_{X}$) and numerical real-time solution ($\mathcal{\pi}_{N}$) of Problem \eqref{eq:ourOCP}.}
\begin{tabular}{cccc}
\toprule
 & $\mathcal{\pi}_{DP}$ & $\mathcal{\pi}_{X}$ & $\mathcal{\pi}_{N}$ \\
\midrule
Driving Cost $[\text{\euro}]$ & 0.566 & 0.568 & 0.566 \\
Fuel Mass $[g]$ & 484.29 & 500.75 & 484.30 \\
\bottomrule
\end{tabular}
\label{tab:ControlUrban}
\end{table}

The same comparison is carried out in Fig.~\ref{fig:ControlCombined} and Table~\ref{tab:ControlCombined}, in the case of the Combined Driving Cycle. 
Similar comments apply: the monetary cost attained by policy $\mathcal{\pi}_X$ is not significantly different from policies $\mathcal{\pi}_N, \mathcal{\pi}_{DP}$; on the other hand, the fuel consumption is about 3.5\% higher.
Also in this case this is reflected in a higher final state of charge: in this case the deviation from the other two policies is more relevant (about 1\%).

\begin{figure}
\centering
\includegraphics[]{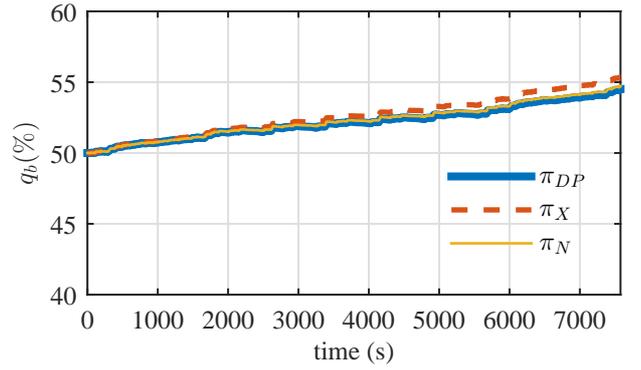}
\caption{Battery state of charge for the Combined Driving Cycle: benchmark solution ($\mathcal{\pi}_{DP}$), analytical real-time solution ($\mathcal{\pi}_{X}$) and numerical real-time solution ($\mathcal{\pi}_{N}$) of Problem \eqref{eq:ourOCP}.}
\label{fig:ControlCombined}
\end{figure}

\begin{table}
\centering
\caption{Driving cost and fuel consumption for the Combined Driving Cycle: benchmark solution ($\mathcal{\pi}_{DP}$), analytical real-time solution ($\mathcal{\pi}_{X}$) and numerical real-time solution ($\mathcal{\pi}_{N}$) of Problem \eqref{eq:ourOCP}.}
\begin{tabular}{cccc}
\toprule
 & $\mathcal{\pi}_{DP}$ & $\mathcal{\pi}_{X}$ & $\mathcal{\pi}_{N}$ \\
\midrule
Driving Cost $[\text{\euro}]$ & 3.070 & 3.073 & 3.070 \\
Fuel Mass $[g]$ & 2723.4 & 2813.3 & 2742.9 \\
\bottomrule
\end{tabular}
\label{tab:ControlCombined}
\end{table}


\section{Sensitivity Analysis}
\label{sec:Section5}

In this section it is compared how policy $\mathcal{\pi}_N$ proposed in Section~\ref{sec:Section4} performs with respect to the benchmark policy $\mathcal{\pi}_{DP}$ proposed in Section~\ref{sec:Section3}, when some parameters are different from the nominal case at hand.
Our goal is to empirically evaluate the range of validity of the real-time numeric solution, bearing in mind that it is found with the approximation $\dot{p}(t) \approx 0$.
Clearly, the approximation is expected to be more critical in those conditions where $\dot{p}$ is significantly different from zero during the mission.
From \eqref{eq:HamiltonianFull}, the approximated policy $\mathcal{\pi}_N$ is expected to be good when $p(t) \ll \alpha v_b^{nom} Q_b, \forall t$.

Equation~\eqref{eq:AdjointStateFull} suggests that the derivative of the adjoint state depends on parameters $\beta$, $\gamma$ and on the gradients $\nabla_x \sigma_b$, $\nabla_x P_f$; since $P_f$ is a function of $P_g=P_m-P_b$, it is affected by battery parameters and by the engine efficiency.
For the sake of simplicity, the remainder of the paper focuses on the sensitivity to $\gamma$ and $A_b$.
The cost coefficient $\gamma$ is proportional to the fuel cost, which can significantly change over the time and from country to country.
The parameter $A_b$ is the gradient of the open circuit voltage with respect to the state of charge, which can significantly change from battery to battery, and possibly also over the life of the battery itself.
Nonetheless, it should be noted that the range considered for both the parameters is much larger than a realistic variability and is taken into account only for scientific analysis.
\startmodif
This sensitivity study can clearly be extended to other parameters and to the analytical policy $\pi_X$.
The study is here necessarily limited and leaves possible extensions to future research.
\stopmodif

The sensitivity of the state of charge variation $q_b(0)-q_b(T)$ to $\gamma$ and $A_b$ is shown in Fig.~\ref{fig:SensitivityXUrban}; the corresponding sensitivity of the adjoint state variation $p(0)-p(T)$ is given in Fig.~\ref{fig:SensitivityPUrban}. 
The variation of state of charge depends primarily on $\gamma$, while the dependence on $A_b$ is minor.
\startmodif
Notice that this result also shows how the performance summarized in Table~\ref{tab:BenchmarkUrban} is affected by these parameters: in particular, high values of $\gamma$ make the electric power more convenient.
In Section~\ref{sec:Section3}, the state of charge variation corresponding to the Full Electric strategy is found to be 0.055 (shown in Fig.~\ref{fig:SensitivityPUrban} with a gray dashed line), which is also the variation attained by the TCMS when the fuel is expensive enough.
\stopmodif
The variation of the adjoint state is negligible for small values of $\gamma$, which correspond to policies that tend to increase the state of charge; it tends to increase significantly when both $\gamma$ and $A_b$ increase, and the maximum is attained for their extreme values ($\gamma=0.23$\euro/kWh, $A_b=140$).

\begin{figure}
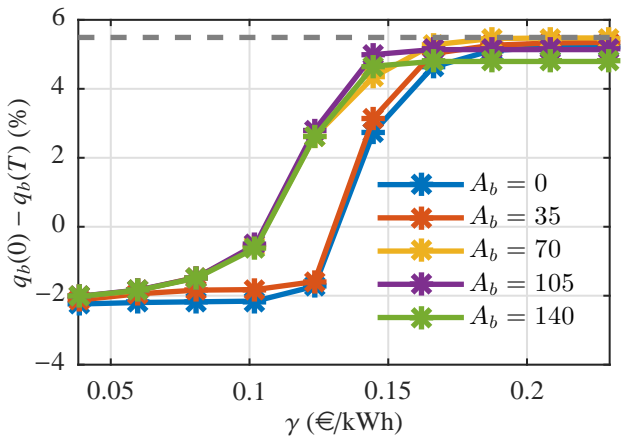

\centering
\psfragfig{ftp75_tcms_sensitivity_dx}
\caption{ \startmodif Battery state of charge variation over the Urban Driving Cycle for different values of the fuel cost $\gamma$ and the battery parameter $A_b$. The gray dashed line indicates the state of charge variation with the Full Electric strategy. \stopmodif }
\label{fig:SensitivityXUrban}
\end{figure}

\begin{figure}
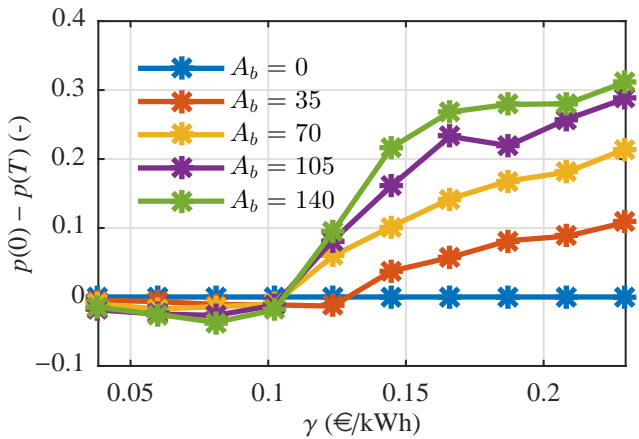

\centering
\psfragfig{ftp75_tcms_sensitivity_dp}
\caption{ \startmodif Adjoint state variation over the Urban Driving Cycle for different values of the fuel cost $\gamma$ and the battery parameter $A_b$. \stopmodif}
\label{fig:SensitivityPUrban}
\end{figure}

The benchmark solution and the numeric real-time solution are compared in Fig.~\ref{fig:ControlSensUrban} and in Table~\ref{tab:ControlSensUrban}, for the Urban Driving Cycle and for two different choices of $\gamma$ and $A_b$:
\begin{enumerate}
\item[(a)] $\gamma=0.08$\euro/kWh, $A_b=70$, \textit{i.e.}, the nominal parametrization employed in the previous section;
\item[(b)] $\gamma=0.23$\euro/kWh, $A_b=140$, \textit{i.e.}, the parametrization which attains the maximum variation of the adjoint state $p(0)-p(T)$.
\end{enumerate}
The case (a) was already commented in Section~\ref{sec:Section4}. As for the case (b), although the costate variation is significant, the effect on the policy is negligible; the policy $\mathcal{\pi}_{N}$ gives the same results as the benchmark policy $\mathcal{\pi}_{DP}$ in terms of cost, state of charge and fuel consumption.

\begin{figure}
\centering
\includegraphics[]{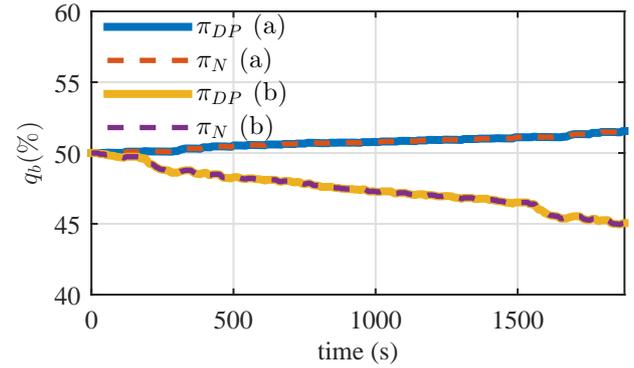}
\caption{Sensitivity analysis for the state of charge in the Urban Driving Cycle: benchmark policy $\mathcal{\pi}_{DP}$ and numerical policy $\mathcal{\pi}_{N}$ in cases (a) and (b).}
\label{fig:ControlSensUrban}
\end{figure}

\begin{table}
\centering
\caption{Sensitivity analysis for the driving cost and the fuel consumption in the Urban Driving Cycle: benchmark policy $\mathcal{\pi}_{DP}$ and numerical policy $\mathcal{\pi}_{N}$ in cases (a) and (b).}
\begin{tabular}{ccccc}
\toprule
 & $\mathcal{\pi}_{DP}$ (a) & $\mathcal{\pi}_{N}$ (a)  & $\mathcal{\pi}_{DP}$ (b) & $\mathcal{\pi}_{N}$ (b)\\
\midrule
Driving Cost $[\text{\euro}]$ & 0.566 & 0.566 & 0.8120 & 0.8121 \\
Fuel Mass $[g]$ & 484.28 & 484.30 & 0 & 0 \\
\bottomrule
\end{tabular}
\label{tab:ControlSensUrban}
\end{table}

The same study was carried out also for the Combined Driving Cycle.
The sensitivity of the state of charge variation $q_b(0)-q_b(T)$ to $\gamma$ and $A_b$ is shown in Fig.~\ref{fig:SensitivityXCombined}; the corresponding sensitivity of the adjoint state variation $p(0)-p(T)$ is given in Fig.~\ref{fig:SensitivityPCombined}. 
The dependence of both $q_b(0)-q_b(T)$ and $p(0)-p(T)$ on $\gamma$ and $A_b$ is almost the same observed for the Urban Driving Cycle. 
The most remarkable difference is the amplitude of the variations. More in detail, notice that if $\gamma$ is high enough, the optimal policy completely depletes the battery and $q_b(0)-q_b(T) = 0.3$.
\startmodif
Also in this case, that this result also shows how the performance summarized in Table~\ref{tab:BenchmarkCombined} is affected by $A_b$ and $\gamma$.
\stopmodif
As for the variation of the adjoint state, it is maximum for the extreme values ($\gamma=0.23$\euro/kWh, $A_b=140$) as in the previous case.

The benchmark policy $\mathcal{\pi}_{DP}$ is compared to policy $\mathcal{\pi}_{N}$ in Fig.~\ref{fig:ControlSensCombined} and in Table.~\ref{tab:ControlSensCombined}, for the Combined Driving Cycle and for the same choices of $\gamma$ and $A_b$ as before.
The first case was already commented in Section~\ref{sec:Section4}.
As for the second case, a significant difference between the two policies is observed.
The increase of both the final cost and the final fuel consumption is of about 3.5\%, whereas the difference between the trends of the state of charge is more relevant: the benchmark policy reaches the lower bound only at the end of the cycle, while policy $\mathcal{\pi}_{N}$ leads to a longer part of the mission spent at the lower bound for the state of charge.
\startmodif
This deterioration of performance can be attributed to the underlying approximation $\dot{p} \approx 0$, that is contradicted by the large variation of the adjoint state observed in Fig.~\ref{fig:SensitivityPCombined}.

In conclusion, the proposed policy $\pi_N$ shows good performance in almost all the tested situations.
The only critical situation involves a long driving cycle and a combination of high fuel cost and high dependence of battery voltage on state of charge, which can falsify the approximations on the adjoint state and consequently make policy $\pi_N$ suboptimal.

\begin{figure}
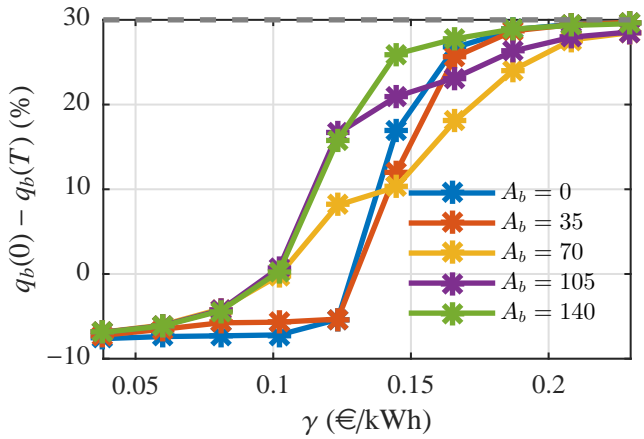

\centering
\psfragfig{ftpMix_tcms_sensitivity_dx}
\caption{ \startmodif Battery state of charge variation over the Combined Driving Cycle for different values of the fuel cost $\gamma$ and the battery parameter $A_b$. The gray dashed line indicates the state of charge variation with the Full Electric strategy. \stopmodif}
\label{fig:SensitivityXCombined}
\end{figure}

While such a combination of parameters is quite extreme at present, notice that the performance degradation could be limited by estimating online the optimal value of the adjoint state, in a similar way to what is done for the ECMS strategies; this amounts to dropping the approximation $p(t)\approx0$ and introducing an estimate $p(t)\approx\hat{p}(t)$.
Notice that the accuracy of the estimate $\hat{p}$ would reasonably depend on the availability of some information on the trip (like elevation and traffic).
This task is, however, outside the scope of this paper and is left for future research.
\stopmodif


\begin{figure}
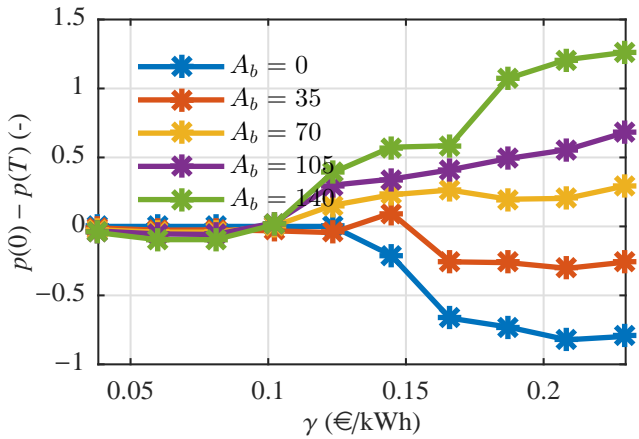

\centering
\psfragfig{ftpMix_tcms_sensitivity_dp}
\caption{ \startmodif Adjoint state variation over the Combined Driving Cycle for different values of the fuel cost $\gamma$ and the battery parameter $A_b$. \stopmodif }
\label{fig:SensitivityPCombined}
\end{figure}


\begin{figure}
\centering
\includegraphics[]{ftpMix_compare_tcms_worstcase.eps}
\caption{Sensitivity analysis for the state of charge in the Combined Driving Cycle: benchmark policy $\mathcal{\pi}_{DP}$ and numerical policy $\mathcal{\pi}_{N}$ in cases (a) and (b).}
\label{fig:ControlSensCombined}
\end{figure}

\begin{table}
\centering
\caption{Sensitivity analysis for the driving cost and the fuel consumption in the Combined Driving Cycle: benchmark policy $\mathcal{\pi}_{DP}$ and numerical policy $\mathcal{\pi}_{N}$ in cases (a) and (b).}
\begin{tabular}{ccccc}
\toprule
 & $\mathcal{\pi}_{DP}$ (a) & $\mathcal{\pi}_{N}$ (a)  & $\mathcal{\pi}_{DP}$ (b) & $\mathcal{\pi}_{N}$ (b)\\
\midrule
Driving Cost $[\text{\euro}]$ & 3.070 & 3.070 & 4.975 & 5.145 \\
Fuel Mass $[g]$ & 2723.4 & 2742.9 & 204.6 & 254.3 \\
\bottomrule
\end{tabular}
\label{tab:ControlSensCombined}
\end{table}


\section{Conclusions and Remarks}
\label{sec:Conclusions}

In this paper, the Total Cost Minimization Strategy (TCMS) was proposed as a suitable approach for least costly energy management in extended range EVs and, more generally, for series HEVs. 

Specifically, the optimization goal was formulated as the overall cost given by the cost of the grid energy, the battery life and the fuel consumption over a given trip. This choice appears more suitable than standard (constrained) fuel minimization, according to the future scenario where more and more HEVs will be plug-in and characterized by high-capacity batteries. Moreover, the use of a ``monetary cost" instead of the standard ``energy consumption" allows to sum up heterogeneous terms without the need of tuning some weighting coefficients. Together with the optimization problem, the model of the powertrain was modified accordingly, in order to take into account the effect of battery aging.

By means of simulations on a full-fledged model of the vehicle, the least costly policy was compared with other policies minimizing different objective functions, when a CNG range extender is available. Here, it was shown that, with current energy costs, the least costly policy does not lead to a full-electric policy even when the driving cycle is within the all-electric range.

Since the benchmark optimum computed using Dynamic Programming cannot be implemented without the \textit{a-priori} knowledge of the driving cycle, a real-time solution of the problem based on the Pontryagin Minimum Principle was studied. In this framework, some analytical guidelines and the numerical solution were provided. It was finally shown that such a real-time strategy is an excellent approximation of the benchmark result for any reasonable combination of model parameters/energy costs.

Future works will be devoted to the implementation of the proposed TCMS approach on a real-world vehicle setup.


\appendix

\section*{Acknowledgements}
This work was partially supported by ``Ricerca sul Sistema Energetico" RSE S.p.a., Milano, Italy.


\section*{References}
\bibliographystyle{elsarticle-num} 
\bibliography{mybibliography}

\begin{thebibliography}{10}
\expandafter\ifx\csname url\endcsname\relax
  \def\url#1{\texttt{#1}}\fi
\expandafter\ifx\csname urlprefix\endcsname\relax\def\urlprefix{URL }\fi
\expandafter\ifx\csname href\endcsname\relax
  \def\href#1#2{#2} \def\path#1{#1}\fi

\bibitem{sciarretta_control_2007}
A.~Sciarretta, L.~Guzzella, Control of hybrid electric vehicles, IEEE Control
  Systems 27~(2) (2007) 60 --70.

\bibitem{delprat2004control}
S.~Delprat, J.~Lauber, T.~Guerra, J.~Rimaux, Control of a parallel hybrid
  powertrain: optimal control, IEEE Transactions on Vehicular Technology 53~(3)
  (2004) 872--881.

\bibitem{barsali2004control}
S.~Barsali, C.~Miulli, A.~Possenti, A control strategy to minimize fuel
  consumption of series hybrid electric vehicles, IEEE Transactions on Energy
  Conversion 19~(1) (2004) 187--195.

\bibitem{won2005energy}
J.~Won, R.~Langari, M.~Ehsani, An energy management and charge sustaining
  strategy for a parallel hybrid vehicle with cvt, IEEE Transactions on Control
  Systems Technology 13~(2) (2005) 313--320.

\bibitem{won2005intelligent}
J.~Won, R.~Langari, Intelligent energy management agent for a parallel hybrid
  vehicle-part ii: torque distribution, charge sustenance strategies, and
  performance results, IEEE Transactions on Vehicular Technology 54~(3) (2005)
  935--953.

\bibitem{sciarretta_optimal_2004}
A.~Sciarretta, M.~Back, L.~Guzzella, Optimal control of parallel hybrid
  electric vehicles, IEEE Transactions on Control Systems Technology 12~(3)
  (2004) 352--363.

\bibitem{serrao_ecms_2009}
L.~Serrao, S.~Onori, G.~Rizzoni, {ECMS} as a realization of pontryagin's
  minimum principle for {HEV} control, in: Proceedings of the American Control
  Conference, 2009, pp. 3964--3969.

\bibitem{paganelli2000simulation}
G.~Paganelli, T.~M. Guerra, S.~Delprat, J.-J. Santin, M.~Delhom, E.~Combes,
  {Simulation and assessment of power control strategies for a parallel hybrid
  car}, Proceedings of the Institution of Mechanical Engineers, Part D: Journal
  of Automobile Engineering 214~(7) (2000) 705--717.
\newblock \href {http://dx.doi.org/10.1243/0954407001527583}
  {\path{doi:10.1243/0954407001527583}}.

\bibitem{Kim2011}
N.~Kim, S.~Cha, H.~Peng, {Optimal control of hybrid electric vehicles based on
  Pontryagin's minimum principle}, IEEE Transactions on Control Systems
  Technology 19~(5) (2011) 1279--1287.
\newblock \href {http://dx.doi.org/10.1109/TCST.2010.2061232}
  {\path{doi:10.1109/TCST.2010.2061232}}.

\bibitem{musardo_aecms_2005}
C.~Musardo, G.~Rizzoni, Y.~Guezennec, B.~Staccia, A-{ECMS}: An adaptive
  algorithm for hybrid electric vehicle energy management, European Journal of
  Control 11~(4) (2005) 509--524.

\bibitem{ambuhl2009predictive}
D.~Amb\"{u}hl, L.~Guzzella, S.~Member, {Predictive Reference Signal Generator
  for Hybrid Electric Vehicles}, IEEE Transactions on Vehicular Technology
  58~(9) (2009) 4730--4740.

\bibitem{borhan2012mpc}
H.~Borhan, A.~Vahidi, A.~Phillips, M.~Kuang, I.~Kolmanovsky, S.~Di~Cairano,
  Mpc-based energy management of a power-split hybrid electric vehicle, IEEE
  Transactions on Control Systems Technology 20~(3) (2012) 593--603.

\bibitem{poramapojana2012predictive}
P.~Poramapojana, Predictive control of hybrid vehicle powertrain for
  intelligent energy management, M.Sc. Thesis, Michigan Technological
  University, 2012.

\bibitem{pisu2007comparative}
P.~Pisu, G.~Rizzoni, A comparative study of supervisory control strategies for
  hybrid electric vehicles, IEEE Transactions on Control Systems Technology
  15~(3) (2007) 506--518.

\bibitem{pisu_lmi-based_2003}
P.~Pisu, E.~Silani, G.~Rizzoni, S.~Savaresi, A {LMI-based} supervisory robust
  control for hybrid vehicles, in: Proceedings of the American Control
  Conference, 2003, pp. 4681 -- 4686.

\bibitem{bradley2009design}
T.~Bradley, A.~Frank, Design, demonstrations and sustainability impact
  assessments for plug-in hybrid electric vehicles, Renewable and Sustainable
  Energy Reviews 13~(1) (2009) 115--128.

\bibitem{axsen2013hybrid}
J.~Axsen, K.~Kurani, Hybrid, plug-in hybrid, or electric -- {W}hat do car
  buyers want?, Energy Policy 61 (2013) 532--543.

\bibitem{Sciarretta2014}
A.~Sciarretta, L.~Serrao, P.~Dewangan, P.~Tona, E.~Bergshoeff, C.~Bordons,
  L.~Charmpa, P.~Elbert, L.~Eriksson, T.~Hofman, M.~Hubacher, P.~Isenegger,
  F.~Lacandia, A.~Laveau, H.~Li, D.~Marcos, T.~N\"{u}esch, S.~Onori, P.~Pisu,
  J.~Rios, E.~Silvas, M.~Sivertsson, L.~Tribioli, A.-J. van~der Hoeven, M.~Wu,
  {A control benchmark on the energy management of a plug-in hybrid electric
  vehicle}, Control Engineering Practice 29 (2014) 287--298.
\newblock \href {http://dx.doi.org/10.1016/j.conengprac.2013.11.020}
  {\path{doi:10.1016/j.conengprac.2013.11.020}}.

\bibitem{guardiola2014insight}
C.~Guardiola, B.~Pla, S.~Onori, G.~Rizzoni, Insight into the hev/phev optimal
  control solution based on a new tuning method, Control Engineering Practice
  29 (2014) 247--256.

\bibitem{moura2013battery}
S.~J. Moura, J.~L. Stein, H.~K. Fathy, Battery-health conscious power
  management in plug-in hybrid electric vehicles via electrochemical modeling
  and stochastic control, Control Systems Technology, IEEE Transactions on
  21~(3) (2013) 679--694.

\bibitem{Serrao2011}
L.~Serrao, S.~Onori, A.~Sciarretta, Y.~Guezennec, G.~Rizzoni, {Optimal energy
  management of hybrid electric vehicles including battery aging}, American
  Control Conference (ACC), 2011~(3) (2011) 2125--2130.

\bibitem{Ebbesen2012}
S.~Ebbesen, P.~Elbert, L.~Guzzella, {Battery state-of-health perceptive energy
  management for hybrid electric vehicles}, IEEE Transactions on Vehicular
  Technology 61~(7) (2012) 2893--2900.
\newblock \href {http://dx.doi.org/10.1109/TVT.2012.2203836}
  {\path{doi:10.1109/TVT.2012.2203836}}.

\bibitem{bertsekas1995dynamic}
D.~Bertsekas, Dynamic programming and optimal control, Athena Scientific
  Belmont, 1995.

\bibitem{Fuerex}
\href{http://www.fuerex.eu/}{{Fuerex - Homepage}}.
\newline\urlprefix\url{http://www.fuerex.eu/}

\bibitem{shiau2010optimal}
C.-S.~N. Shiau, N.~Kaushal, C.~T. Hendrickson, S.~B. Peterson, J.~F. Whitacre,
  J.~J. Michalek, Optimal plug-in hybrid electric vehicle design and allocation
  for minimum life cycle cost, petroleum consumption, and greenhouse gas
  emissions, Journal of Mechanical Design 132~(9) (2010) 091013.

\bibitem{ETHsimulator}
L.~Guzzella, A.~Amstutz, The {Q}uasi{S}tatic {S}imulation {(QSS)} {T}oolbox,
  \url{http://www.idsc.ethz.ch/Downloads/DownloadFiles/qss} (2005).

\bibitem{POLIMIsimulator}
J.~Guanetti, S.~Formentin, S.~Savaresi, The series {HEV} simulator {SHE},
  \url{http://home.deib.polimi.it/formentin/SHE.html} (2014).

\bibitem{guzzella_vehicle_2007}
L.~Guzzella, A.~Sciarretta, Vehicle Propulsion Systems: Introduction to
  Modeling and Optimization, Springer, 2007.

\bibitem{ramadesigan2012modeling}
V.~Ramadesigan, P.~W. Northrop, S.~De, S.~Santhanagopalan, R.~D. Braatz, V.~R.
  Subramanian, Modeling and simulation of lithium-ion batteries from a systems
  engineering perspective, Journal of The Electrochemical Society 159~(3)
  (2012) R31--R45.

\bibitem{jan_experimentally_2012}
F.~Todeschini, S.~Onori, G.~Rizzoni, An experimentally validated capacity
  degradation model for li-ion batteries in phevs applications, Proceedings of
  the 8th IFAC Symposium on Fault Detection, Supervision and Safety of
  Technical Processes 8~(1) (2012) 456--461.

\bibitem{marano2009lithium}
V.~Marano, S.~Onori, Y.~Guezennec, G.~Rizzoni, N.~Madella, Lithium-ion
  batteries life estimation for plug-in hybrid electric vehicles, in: Vehicle
  Power and Propulsion Conference, VPPC, IEEE, 2009, pp. 536--543.

\bibitem{arsie2010effects}
I.~Arsie, G.~Rizzo, M.~Sorrentino, Effects of engine thermal transients on the
  energy management of series hybrid solar vehicles, Control Engineering
  Practice 18~(11) (2010) 1231--1238.

\bibitem{Advisor1999}
K.~Wipke, M.~Cuddy, S.~Burch, Advisor 2.1: a user-friendly advanced powertrain
  simulation using a combined backward/forward approach, Vehicular Technology,
  IEEE Transactions on 48~(6) (1999) 1751--1761.

\bibitem{Peterson2010}
{Lithium-ion battery cell degradation resulting from realistic vehicle and
  vehicle-to-grid utilization}, Journal of Power Sources 195~(8) (2010)
  2385--2392.

\bibitem{sniedovich2010dynamic}
M.~Sniedovich, Dynamic programming: Foundations and principles, CRC press,
  2010.

\bibitem{Ambuhl2010}
D.~Amb\"{u}hl, O.~Sundstr\"{o}m, A.~Sciarretta, L.~Guzzella, {Explicit optimal
  control policy and its practical application for hybrid electric
  powertrains}, Control Engineering Practice 18~(12) (2010) 1429--1439.
\newblock \href {http://dx.doi.org/10.1016/j.conengprac.2010.08.003}
  {\path{doi:10.1016/j.conengprac.2010.08.003}}.

\end{thebibliography}


\section{Simulation Oriented Model Details}
\label{sec:AppendixA}

As described in Section~\ref{sec:Section2}, the simulator is made of a vehicle dynamics part, a battery-related part and a thermal generation area.

The vehicle dynamics area is represented in Fig~\ref{fig:Simulator} by the blocks related to \textit{Vehicle}, \textit{Transmission} and \textit{Motor}.
The block \textit{Vehicle} describes the relationships among the vehicle's longitudinal speed $v$, the slope $\theta$ and the wheel's rotational speed $\omega_w$ and torque $T_w$ as
\begin{gather}
M \dot{v} = \frac{ T_{w} }{ R_{w} } - F_{b} - F_{f} \nonumber\\
\omega_{w} = \frac{v}{R_{w}}\nonumber
\end{gather}
where $M$ is the vehicle mass, $R_w$ is the wheel radius, $F_b$ is the braking force. The friction term $F_f$ can be detailed as
\begin{equation}
F_{f} = - M g \sin{\theta} - C_{r} M g \cos{\theta} - C_{v} v - \frac{1}{2} \rho A C_{x} v^{2}\nonumber
\end{equation}
where $g$ is the gravitational acceleration, $C_r, C_v, C_x$ are respectively the vehicle's roll, viscous and drag coefficients, $\rho$ is the air density, $A$ is the vehicle's reference area.

The block \textit{Transmission} defines the relation that links the wheel's speed $\omega_w$ and wheel torque $T_w$ to the traction motor speed $\omega_m$ and the motor torque $T_m$ as
\begin{gather}
T_{m} = \frac{\eta_{t}^{-\text{sign}(T_w)}}{r} T_{w} \nonumber\\
\omega_{m} = r \omega_{w}\nonumber
\end{gather}
where $r$ is the transmission ratio and $\eta$ is the transmission efficiency.

The block \textit{Motor} models the traction motor power as
\begin{equation}
P_{el} = T_{m} \omega_{m} \eta_{m}(T_{m}, \omega_{m})^{-\text{sign}(P_m)} \nonumber\\
\end{equation}
where $\eta_m$ is the motor efficiency, depending on the mechanical operating point.

As for the battery and thermal generation areas, the underlying model is mainly described by \eqref{eq:Link},~\eqref{eq:BatteryCircuit},~\eqref{eq:BatteryOCVAffine},~\eqref{eq:BatterySoc},~\eqref{eq:BatterySoh},~\eqref{eq:BatterySigmaVariable},~\eqref{eq:EngineFuelRateMap} presented in Section~\ref{sec:Section3}. Equation~\eqref{eq:BatteryCapacityConstant} is replaced by
\begin{equation}
Q_{b}(t) = Q_{b}^{nom} (1 - 0.2 \xi_{b}(t)).\nonumber
\label{eq:BatteryCapacityVariable}
\end{equation}

The numerical values of parameters used in the Simulator and in the Control Oriented Models are given in Table~\ref{tab:ModelParams}.

\begin{table}
\centering
\caption{Simulator and control oriented model parameters}
\begin{tabular}{rc|rc|rc}
\toprule
$A_b \, (\si{\volt})$ & 70 & $B_b \, (\si{\volt})$ & 320 & $v_b^{nom} \, (\si{\volt})$ & 355 \\
$Q_{b}^{nom} \, (\si{Ah})$ & 65 & $N_b \, (-)$ & 2000 & $R_b \, (\si{m\ohm})$ & 500 \\
$A_r \, (-)$ & 3.43 & $B_r \, (\si{k\watt})$ & 5.61 & & \\
$\lambda_{r} \, (\si{k\joule/\g})$ & 47 & $\rho_{f} \, (\si{\kg/l})$ & 0.2 & & \\
$\alpha \, (\frac{\si{\euro}}{\si{kWh}})$ & 0.2 & $\bar{q}_{min} \, (-)$ & 0.2 & $\bar{q}_{max} \, (-)$ & 0.9 \\
$\beta \, (\frac{\si{\euro}}{\si{kWh}})$ & 500 & $\bar{P}_{b,min} \, (\si{kW})$ & -50 & $\bar{P}_{b,max} \, (\si{kW})$ & 50 \\
$\gamma \, (\frac{\si{\euro}}{\si{kWh}})$ & 0.077 & $\bar{P}_{r,min} \, (\si{kW})$ & 0 & $\bar{P}_{r,max} \, (\si{kW})$ & 25 \\
$M \, (\si{kg})$ & 1500	& $C_{x} \, (-)$ & 0.22 & $\rho \, (\si{kg/\cubic\m})$ & 1.18\\
$R_{w} \, (\si{m})$	& 0.3 & $C_{v} \, (\si{kg\per\s})$ & 0 & $r \, (-)$ & 3.5\\
$A \, (\si{\square\m})$ & 2 & $C_{r} \, (-)$ & 0.008 & $\eta_{t} \, (-)$ & 0.98\\
\bottomrule
\end{tabular}
\label{tab:ModelParams}
\end{table}


\section{Power Limits of the Explicit Control Law}
\label{sec:AppendixB}

The power limits separating the regions of the optimal control law of Theorem 1 are defined as
\begin{equation}
\begin{split}
P_{lim}^{(1,4)} &= - ( \Psi_{13} - 4 N_b^{1.5} Q_b^{0.5} p \Psi_1 + \Psi_7 + \Psi_8 + 4 N_b^{1.5} Q_b^{1.5} \alpha v_{oc} \Psi_1\\
&- \Psi_{11} - \Psi_9 + 6 N_b Q_b^2 \sigma_b \alpha \beta v_{oc}^2 + \Psi_{10} - 6 N_b Q_b \sigma_b \beta p v_{oc} \\
&+ N_b^{0.5} Q_b^{1.5} \sigma_b \beta v_{oc} \Psi_1 4 - 4 A_r N_b Q_b^2 \sigma_b \beta \gamma v_{oc}^2 )/\Psi_2 \\
P_{lim}^{(2,4)} &= - (\Psi_{13} + N_b p \Psi_6 + \Psi_7 + \Psi_8 - N_b Q_b \alpha v_{oc} \Psi_6 - Q_b \sigma_b \beta v_{oc} \Psi_6\\
&- \Psi_{11} + \Psi_{12} + 2 A_r B_r N_b^2 Q_b^2  R_b \gamma^2 - \Psi_{14} - A_r N_b^2 Q_b^2 \alpha \gamma v_{oc}^2 \\
&+ A_r N_b^2 Q_b \gamma p v_{oc} - A_r N_b Q_b^2 \sigma_b \beta \gamma v_{oc}^2 )/(2 A_r^2  N_b^2 Q_b^2 R_b \gamma^2 ) \\
P_{lim}^{(3,4)} &= - (\Psi_{13} + \Psi_7 + \Psi_8 - \Psi_{11} - \Psi_9 + \Psi_{12} + \Psi_{10} - \Psi_{14}\\
&+ \Psi_5 - \Psi_3 - \Psi_4 ) / \Psi_2\\
P_{lim}^{(1,5)} &= - ( \Psi_{13} + \Psi_7 + \Psi_8 - \Psi_{11} - \Psi_9 - \Psi_{12} + \Psi_{10} + \Psi_{14}\\
&- \Psi_5 + \Psi_3 - \Psi_4 ) / \Psi_2\\
\end{split}
\label{eq:PowerLimits}
\end{equation}
where
\begin{equation*}
\begin{split}
\Psi_1    &= \sqrt{Q_b \sigma_b \alpha \beta v_{oc}^2  - \sigma_b \beta p v_{oc} + A_r B_r N_b Q_b R_b \gamma^2  - A_r Q_b \sigma_b \beta \gamma v_{oc}^2 } ,\\
\Psi_2    &= 4 A_r^2  N_b^2  Q_b^2  R_b \gamma^2 , \; \Psi_3 = 4 \sqrt{A_r B_r R_b} N_b^2  Q_b^2 \alpha \gamma v_{oc} ,\\
\Psi_4    &= 4 \sqrt{A_r B_r R_b} N_b Q_b^2 \sigma_b \beta \gamma v_{oc} , \; \Psi_5 = 4 \sqrt{A_r B_r R_b} N_b^2  Q_b \gamma p ,\\
\Psi_6    &= (\Psi_{13} + \Psi_7 + \Psi_8 - \Psi_{11} + \Psi_9 + \Psi_{12} + \Psi_{10} - \Psi_{14} \\
&- 2 A_r N_b^2  Q_b^2  \alpha \gamma v_{oc}^2  + 2 A_r N_b^2  Q_b \gamma p v_{oc} - 2 A_r N_b Q_b^2  \sigma_b \beta \gamma v_{oc}^2 )^{0.5} ,\\
\Psi_7    &= N_b^2  Q_b^2  \alpha^2  v_{oc}^2 , \; \Psi_8 = Q_b^2  \sigma_b^2  \beta^2  v_{oc}^2 ,\\
\Psi_9    &= A_r^2  N_b^2  Q_b^2  \gamma^2  v_{oc}^2 , \; \Psi_{10} = 4 A_r B_r N_b^2  Q_b^2  R_b \gamma^2 ,\\
\Psi_{11} &= 2 N_b^2  Q_b \alpha p v_{oc} , \; \Psi_{12} = 2 N_b Q_b^2  \sigma_b \alpha \beta v_{oc}^2 ,\\
\Psi_{13} &= N_b^2  p^2 , \; \Psi_{14} = 2 N_b Q_b \sigma_b \beta p v_{oc}.
\end{split}
\end{equation*}

\startmodif
\section{DP formulation for fuel-based cost functions}
\label{sec:AppendixC}

The ECMS, CS, FE formulations defined in Section~\ref{sec:Section2} are compared with the TCMS formulation in Section~\ref{sec:Section3}; to ensure a fair comparison, all approaches are implemented with DP, using the same underlying model (i.e. the discretized equations \eqref{eq:DiscreteQb}, \eqref{eq:DiscretePf}).
Using the same notation of Section~\ref{sec:Section3}, the discretized running cost for ECMS, CS, FE is
\begin{equation}
g(x(k),u(k),w(k)) = \frac{m_{f}(k+1)-m_{f}(k)}{T_{s}}.
\end{equation}
The constraints on the input $u(k) \in U$ and on the state of charge $x(k) \in X$ are defined in \eqref{eq:TCMSInputSet} and \eqref{eq:TCMSStateSet}, respectively.
The terminal cost is simply $h(x(k),u(k),w(k)) = 0$ for FE and CS, while for the ECMS it is
\begin{equation}
h(x(k),u(k),w(k)) = \zeta (q_b(N)-q_b(0)).
\end{equation}
The final state constraint for CS is $x(N) \in \left\lbrace x(0) \right\rbrace$, while for ECMS and CS it is simply $x(N) \in X$.
\stopmodif




\end{document}